\theoremstyle{plain}
\newtheorem{theorem}{Theorem}[section]
\newtheorem{corollary}[theorem]{Corollary}
\newtheorem{lemma}[theorem]{Lemma}
\newtheorem{proposition}[theorem]{Proposition}
\newtheorem{definition}[theorem]{Definition}
\newtheorem{assumption}[theorem]{Assumption}
\theoremstyle{remark}
\newtheorem{remark}[theorem]{Remark}
\numberwithin{equation}{section}
\newcommand{\ind}{1\!\kern-1pt \mathrm{I}}
\DeclareMathOperator*{\esssup}{\textnormal{ess\,sup}}
\newcommand{\cA}{\mathcal{A}}
\newcommand{\cC}{\mathcal{C}}
\newcommand{\cF}{\mathcal{F}}
\newcommand{\cU}{\mathcal{U}}
\newcommand{\cZ}{\mathcal{Z}}
\newcommand{\la}{\lambda}
\renewcommand{\i}{\infty}
\newcommand{\RR}{\mathbb{R}}
\newcommand{\QQ}{\mathbb{Q}}
\newcommand{\NN}{\mathbb{N}}
\newcommand{\massP}{\mathbf{P}}
\newcommand{\massQ}{\mathbf{Q}}
\newcommand{\massE}{\mathbf{E}}
\newcommand{\tT}{{0\leq t\leq T}}
\newcommand{\Econd}[2]{\massE\left[\left.#1\right|#2\right]}        %conditional expectation
\newcommand{\Es}[1]{\massE \left[ #1 \right]}
\newcommand{\conv}{\textnormal{conv}}
\newcommand{\wt}[1]{{\widetilde{#1}}}
\newcommand{\wh}[1]{{\widehat{#1}}}
\newcommand{\sint}{\stackrel{\mbox{\tiny$\bullet$}}{}}
\providecommand{\keywords}[1]{\textbf{{Keywords.}} #1}
\providecommand{\subjclass}[1]{\textbf{{MSC2010.}} #1}
\begin{document}
\title{On the existence of shadow prices for optimal investment\\ with random endowment}
\author[1]{Lingqi Gu}
\author[1, 2]{Yiqing Lin\thanks{corresponding author: yiqing.lin@polytechnique.edu}}
\author[1, 2]{Junjian Yang}
\affil[1]{\small Fakult\"at f\"ur Mathematik, Universit\"at Wien, Oskar-Morgenstern Platz 1, A-1090 Wien, Austria}
\affil[2]{\small Centre de Math\'ematiques Appliq\'ees, \'Ecole Polytechnique, F-91128 Palaiseau Cedex, France}
\date{\small \today}

%\title[Shadow price for optimal investment]{On the existence of shadow price for optimal investment with random endowment}
%\author{Lingqi Gu}
%\author{Yiqing Lin}
%\author{Junjian Yang}
%\address{University of Vienna, Faculty of Mathematics, Oskar-Morgenstern Platz 1, A-1090 Wien, Austria}
%\email{yiqing.lin@univie.ac.at}
\maketitle
\begin{abstract}
In this paper, we consider a num\'eraire-based utility maximization problem under constant proportional transaction costs and random endowment. 
Assuming that the agent cannot short sell assets and is endowed with a strictly positive contingent claim, a primal optimizer of this utility maximization problem exists. 
Moreover, we observe that the original market with transaction costs can be replaced by a frictionless shadow market that yields the same optimality. 
On the other hand, we present an example to show that in some case when these constraints are relaxed, the existence of shadow prices is still warranted. 
\end{abstract}

\noindent \subjclass{91B16; 91G10}

 \vspace{3mm} 
\noindent \keywords{Utility maximization, random endowment, transaction costs, no-short-selling constraints, shadow prices}

\maketitle 

\section{Introduction}
\noindent Recently, the problem of utility maximization in markets under proportional transaction costs has received many authors' attention. 
In such a market, the investor is assumed to buy securities at an ask price which is higher than the bid one that he/she receives when selling. 
The presence of proportional transaction costs enables the consideration of the portfolio optimization with models beyond semimartingales in an arbitrage-free way, which is economically meaningful (cf.~\cite{Gua06, GRS08}). 

The portfolio optimization problem under proportional transaction costs on the Merton model with the logarithm utility dates back to Magill and Constantinides \cite{MC76} and Constantinides \cite{Con86}. 
In their heuristic works, they concluded that the optimal way is to keep the current holdings in all assets in a no-trade region and to trade merely at the boundaries of such region. 
Later on, this problem has been studied extensively by many authors, among them, Taksar et al.~\cite{TKA88} first introduced the tools of singular stochastic control in the context of maximization of the rate of growth of wealth. Davis and Norman \cite{DN90} provided a rigorous formulation of the problem in \cite{MC76, Con86} and computed the location of the boundaries by solving a free boundary problem for the nonlinear partial differential equation. 
Afterwards, Shreve and Soner \cite{SS94} extended the results of \cite{DN90} by introducing viscosity solutions. We also refer to Dumas and Luciano \cite{DL91} for the maximization problem of
the asymptotic power growth rate, in which an explicit solution is constructed in a tractable case. Apart from the stochastic control approach, this utility maximization problem has also been studied with more general models by the convex duality approach. Initiated by Cvitani\'c and Karatzas \cite{CK96}, the idea of constructing an auxiliary dual problem is widely applied to various cases (e.g.~\cite{Kab99, DPT01, Bou02, BM03, CO11, CS15duality, CSY15}).

It is observed that even small transaction costs could dramatically influence the optimal choice of the investor in utility maximization (cf.~\cite{LL02}). 
Therefore, a natural question that arises is whether impacts of transaction costs on both the optimal strategy and the maximal utility can be reproduced in a frictionless market; 
 mathematically speaking, for a given utility maximization problem, we wonder whether there exists a semimartingale process lying between the bid and ask price, called {\it shadow price}, 
 such that trading in a frictionless way for this price process leads to the same optimality under transaction costs. 
The answer to this question is affirmative in finite probability spaces according to Kallsen and Muhle-Karbe \cite{KMK11}. 
However, for more general settings, this question is elusive.

With geometric Brownian motion models, Kallsen and Muhle-Karbe \cite{KMK10} considered an infinite-horizon optimal investment and consumption problem with logarithmic utility by employing tools from stochastic control and 
 constructed explicitly a shadow price by solving a free boundary problem. 
In the same framework, this idea was afterward generalized in \cite{CSZ13, HP15} for power utility to derive a shadow price. 
By the same token, Gerhold et al.~\cite{GMKS13, GGMKS14} constructed shadow prices for logarithmic and long-run power utility functions when the time horizon becomes finite.

While considering a similar problem by the convex duality method, 
the authors of \cite{CK96} showed that if the solution of a suitable dual problem could be attained by a so-called {\it consistent price system} (CPS) which is a (local) martingale, then the optimal solution could be characterized by solving a hedging problem in the ``shadow'' market associated with this CPS. 
This ``folklore'' has been clarified and sharpened very recently by Czichowsky et al.~in two aspects: 
  first, it is observed in \cite{CS15duality} that the result of Cvitani\'c and Karatzas holds true even in the framework of general c\`adl\`ag stock-price processes $S$ with only natural regularity conditions on utility functions; 
  secondly, a sufficient (but not necessary) condition for ensuring the local martingale property of the optimal dual processes is found in \cite{CSY15}, that is, $S$ is continuous and satisfies ``no unbounded profit with bounded risk" $(NUPBR)$. 
Subsequently, this condition is replaced by a weaker condition of $(TWC)$ of ``two way crossing'' in \cite{CPSY16}.
It is worth mentioning that the notion of shadow price is generalized to a sandwiched form in \cite{CS15duality}, which is adapted to the market driven by a stock-price process with c\`adl\`ag paths. 
This idea is afterward  adopted by Bayraktar and Yu \cite{BY15} in order to study a similar problem but with random endowment: they constructed shadow prices in the sandwiched sense as \cite{CS15duality}, whenever the duality result holds and a sufficient condition on the dual optimizer is assumed.
When the utility function is defined on the whole real line, Czichowsky and Schachermayer \cite{CS15portfolio} proved that the dual optimizer defines a shadow price as long as the primal optimizer is attainable. 
In particular, this can be ensured if $S$ is an exponential fractional Brownian motion. 
In addition, we refer to \cite{LY16} for the generalization of the results in \cite{CS15portfolio} with random endowment.

Instead of studying the dual optimizer, Loewenstein \cite{Loe00} constructed shadow markets directly from the derivatives of dynamic primal value functions under no-short-selling constraints. 
This argument is based on the existence of the constrained primal solution, which was a hypothesis assumed by \cite{Loe00} but has been affirmed by Benedetti et al.~\cite{BCKMK13} 
  when they applied Loewenstein's approach to a similar problem with Kabanov's multi-currency model. 

In this paper, we consider on the one hand a num\'eraire-based utility maximization problem with constant proportional transaction costs and random endowment under no-short-selling constraints. In contrast to \cite{BY15}, our method is to study straight away the primal problem without the formulation of  the dual problem. First, we are inspired by \cite[Section 3.3]{Sch04LN} and prove the existence of constrained primal solutions. 
Second, under the assumption that the agent is endowed with a positive random endowment rather than a deterministic initial wealth, we follow the lines of \cite{Loe00, BCKMK13} to construct a shadow price directly from the primal solution. Comparing with \cite{BY15}, our result aims to prove the existence of a classical shadow price process instead of the sandwiched one and moreover the regularity conditions assumed in \cite{BY15} on the random endowment is removed. 
On the other hand, we discuss the existence of shadow prices when the constraints are violated and the random endowment is allowed to be negative. 
We provide an example in the Black-Scholes framework  with a constructive random endowment. In this example, shadow prices exist
and can be explicitly defined.

This paper is organized as follows. 
In Section 2, we introduce notations and formulate the utility maximization problem under proportional transaction costs with random endowment.
Section 3 presents our main result, i.e., the existence of shadow prices under the no-short-selling constraint. 
Next, we provide in Section 4 an example which falls out of the framework studied in Section 3, however shadow prices can be explicitly defined. 
%Finally, some auxiliary lemma and theorem are presented in Appendix. 
%
%
%{\color{cyan}(The contribution should be precisely
%compared with e.g. \cite{BY15}: what are precisely the novelties.) } {\color{Magenta} (Junjian: We have mentioned this paper and we do not need to do this.)}
%
%{\color{cyan}(To sum up: Section 4 should be removed, Section 5 thoroughly rewritten, positivity of $e_T$
%justified and then the paper will be suitable for publication.) }
%
%{\color{Magenta} (Junjian: We do not remove example.)}

\section{Formulation of the problem}

In this section, we shall briefly introduce the basic setting of the utility maximization problem in markets with random endowment and transaction costs, as well as the definition of the shadow price in the classical sense. 
The reader, who has more interests in the details of these topics, is referred to \cite{KMK10,KMK11,BCKMK13,CMKS14,CSY15}.

We consider a num\'eraire-based model: the financial market consists of two assets, one bond and one stock, where the price of the bond $B$ is constant and normalized to $B\equiv 1$. 
We denote by $S=(S_t)_{0\leq t\leq T}$ the price process of the stock, 
 which is based on a filtered probability space $(\Omega, \cF, (\cF_t)_{0\leq t\leq T}, \massP)$ satisfying the usual hypotheses of right continuity and saturatedness, 
 where $\cF_0$ is assumed to be trivial.  
Here, $T$ is a finite time horizon. 
In the sequel, we denote $L^0(\Omega, \mathcal{F}, \massP)$ by $L^0$ and $L^1(\Omega, \mathcal{F}, \massP)$ by $L^1$. 
Throughout the paper we make the following assumptions:

\begin{assumption} \label{Sassumption}
 The price process $S=(S_t)_{0\leq t\leq T}$ is adapted to $(\cF_t)_{0\leq t\leq T}$, with c\`adl\`ag and strictly positive paths. Additionally, $\mathcal{F}_{T-}=\mathcal{F}_T$ and $S_{T-}=S_T$.
\end{assumption}
%{\color{red}
%\begin{remark}
 %The assumption $\mathcal{F}_{T-}=\mathcal{F}_T$ implies that we can assume, w.l.o.g. that 
 %$\varphi_T^0=0.$
%\end{remark}}

Similarly to \cite{CSY15}, we introduce constant proportional transaction costs $0<\lambda<1$ for the trading of the stock, which models the width of the bid-ask spread.
The agent has to pay a higher ask price $S_t$ to buy stock shares but only receives a lower bid price $(1-\lambda)S_t$ when selling them. 
 
As the counterpart of martingale measures in the frictionless case, consistent price systems (CPSs) play a very important role in the framework with transaction costs (compare, e.g., \cite{KS09, Sch14super}). 
In the present paper, to establish the utility maximization problem, we adopt an extended notion -- $\lambda$-supermartingale-CPSs, similarly defined as in \cite{BCKMK13}.

\begin{definition} \label{CPS}
 Fix $\lambda>0$ and the price process $S=(S_t)_{0\le t\le T}$ satisfying Assumption \ref{Sassumption}.
 A $\lambda$-supermartingale-CPS is a couple of two positive processes $Z=(Z^0_t,Z^1_t)_{0\le t\le T}$ consisting of two supermartingales $Z^0$ and $Z^1$, 
  such that
  \begin{equation}\label{J10}
   {S}^Z_t:=\frac{Z^1_t}{Z^0_t} \in [(1-\lambda)S_t, S_t],\quad \mbox{a.s.},
  \end{equation}
  for all $0\le t\le T$.
  
 The set of all $\lambda$-supermartingale-CPSs is denoted by $\cZ^\lambda_{sup}$.
\end{definition}

We introduce now the following assumption on the existence of a supermartingale-CPS:
%which is an analogue to the existence of an equivalent supermartingale density in the frictionless setting.

\begin{assumption}\label{Scps}
 For some $0<{\lambda'}<\lambda$, we have that $\mathcal{Z}^{\lambda'}_{sup}\neq\emptyset$.
\end{assumption}
%{\color{blue} is there any interpretation of this assumption in terms of NA available in the literature?  }
 \begin{remark}
 We remark that this condition corresponds to the ``no unbounded profit with bounded risk'' with no-short-selling constraint on portfolios in the frictionless theory,  
 %in the stochastic portfolio theory, 
 see \cite{KK07}.
 %the  the Assumption \ref{Sassumption} corresponds to the condition of
 \end{remark}
 %{\color{cyan}
%In Assumption 2.3: I think this is needed for ALL $\lambda'<\lambda$ and not just for one of
%them. At least in [31] this seems to be the hypothesis.}
%
%{\color{Magenta}
\begin{remark}
Instead of all $0<{\lambda'}<\lambda$, we only need that the condition above holds for some $0<{\lambda'}<\lambda$, similar to Lemma 3.1 in \cite{Sch14super}. This condition is sufficient to show the convex compactness and the $L^0$-boundedness of the set of admissible terminal wealth.
%$\cC^{\lambda}(x)$, similar to Lemma . 
\end{remark}

In this market under transaction costs, the following definition of self-financing trading strategies is commonly adopted, e.g., in \cite{Sch14admissible, Sch14super}.  

\begin{definition}
 Fix $\lambda\in(0, 1)$. 
 A self-financing trading strategy under transaction costs $\lambda$ is a predictable $\RR^2$-valued finite variation process 
  $\varphi = (\varphi_t^0, \varphi_t^1)_{0\leq t\leq T}$ 
  such that 
    $$ \int_s^td\varphi^{0,\uparrow}_u = \int_s^t(1-\lambda)S_ud\varphi_u^{1,\downarrow}, \quad \int_s^td\varphi^{0,\downarrow}_u = \int_s^tS_ud\varphi_u^{1,\uparrow}, $$
  for all $0\leq s\leq t\leq T$, where $\varphi^{\uparrow}$ and $\varphi^{\downarrow}$ denote the Jordan-Hahn decomposition of $\varphi$. 
\end{definition}

The processes $\varphi_t^0$ and $\varphi_t^1$ describe the amount of bond and the number of stock shares held at time $t\in [0,T]$. 

Assume that as well as being able to trade on the financial market, the agent is endowed with a positive random endowment at the terminal time $T$, which is described by an $\mathcal{F}_T$-measurable random variable.
\begin{assumption} \label{re}
 The endowment $\overline{e}_T\in \mathcal{F}_T$ is a strictly positive and finite-valued random variable, which can be decomposed into a deterministic part and a random part, i.e., 
  $\overline{e}_T=x+e_T$, where $x>0$ and $e_T\geq 0$, a.s. We assume that $x$ is the initial wealth of the agent and $e_T$ is endowed at time $T$.  
\end{assumption} 
\begin{remark}
In the financial market, $e_T$ can be explained as a positive contingent claim, e.g., an option contract.
\end{remark}
%\begin{remark}
 %Indeed, for the utility maximization problem, it has little matter when the agent receives the deterministic and riskless endowment $x$. 
% Thus 
%\end{remark}

In this paper, we shall consider a utility maximization problem similar to the one in Benedetti et al.~\cite{BCKMK13}, where the agent is facing the no-short-selling constraint, 
  which forces him to keep both the amount of bond and the number of stock shares positive. 
In other words, with the initial wealth $x$, the agent is only allowed to trade with the admissible strategies defined as follows:

\begin{definition}
 Under transaction costs $\lambda\in(0, 1)$, a self-financing strategy $\varphi=(\varphi^0,\varphi^1)$ with 
   $(\varphi_{0}^0, \varphi_{0}^1)=(x, 0)$ and $(\varphi_{T}^0, \varphi_{T}^1)=(\varphi_{T}^0, 0)$ is called admissible, 
  if for each $t\in[0,T]$ we have that 
   $$\varphi_t^0\geq 0\, \mbox{ and }\, \varphi_t^1\geq 0,\quad \mbox{a.s}. $$
 We denote by $\mathcal{A}^\lambda(x)$ the collection of all such strategies.
 Moreover, we define
   $$ \mathcal{C}^\lambda(x):=\left\{g\in L^0_+ \,\Big|\, g\leq\varphi^0_T,\,\mbox{ for some } \,(\varphi^0, \varphi^1)\in\mathcal{A}^\lambda(x)\right\}. $$
\end{definition}

We suppose that the agent's preferences over terminal wealth are modeled by a utility function $U:(0,\infty)\to\RR$, 
  which is strictly increasing, strictly concave, continuously differentiable. 
Furthermore, the function $U$ satisfies the Inada conditions, i.e., 
  $$ U'(0) := \lim_{x\to 0}U'(x) = \infty \quad \textnormal{ and } \quad U'(\infty) := \lim_{x\to \infty}U'(x) = 0. $$
and the following condition of reasonable asymptotic elasticity (RAE).

\begin{assumption} \label{U(x)assumption}
 The utility function $U$ satisfies the reasonable asymptotic elasticity, i.e.,
   $$ AE(U):= \limsup_{x\to\infty}\frac{xU'(x)}{{U}(x)}< 1,$$
\end{assumption}
Denote $V: \RR_+\to\RR$ the convex conjugate function of $U$ defined by 
$$ V(y):=\sup_{x>0}\{U(x)-xy\}, \quad y>0,$$
which is strictly decreasing, strictly convex and continuously differentiable and satisfies 
$$V'(0)=-\infty, \quad  V'(\infty)=0,  \quad  V(0)=U(\infty),  \quad   V(\infty)=U(0). $$

For financial interpretation and more results about the previous assumption, we refer to \cite{KS99} and \cite{KS03}.

Without loss of generality, we may assume $U(\infty)>0$ to simplify the analysis and we define $U(x)=-\infty$ whenever $x\leq 0$.

Then, the problem for the agent is to maximize expected utility at terminal time $T$ from his bond account derived from trading and the random endowment, i.e., 
\begin{equation} \label{pp}
  u(x;e_T):=\sup_{g\in \cC^{\lambda}(x)}\massE [U(g+e_T)]. 
\end{equation}

For $Z\in \cZ^\lambda_{sup}$, define $S^Z:=\frac{Z^1}{Z^0}$. 
By the definition, $S^Z$ is a positive semimartingale taking values in $[(1-\lambda)S, S]$. 
Then, we can construct a frictionless market consisting of one bond with zero interest rate and an underlying asset, whose price process is $S^Z$. 
Adapting the previous setting under transaction costs, we adopt the following notion of self-financing trading strategies. 

\begin{definition}
 In the frictionless market associated with $S^Z$, an $\mathbb{R}^2$-valued predictable process $\widetilde{\varphi}:=(\widetilde{\varphi}_t^0, \widetilde{\varphi}_t^1)$ 
   starting from $(x,0)$ is a self-financing trading strategy, 
   if $\widetilde{\varphi}^1$ is $S^Z$-integrable and 
    $$ \widetilde{\varphi}^0_t+\widetilde{\varphi}^1_tS^Z_t = x+\int^t_0\widetilde{\varphi}_u^1dS^Z_u,\quad 0\leq t\leq T. $$
 Here, $\widetilde{\varphi}_t^0$ and $\widetilde{\varphi}_t^1$ describe the amount of bond and the number of stock shares held at time $t\in[0,T]$.
\end{definition}

We shall formulate a utility maximization problem for the frictionless model with $S^Z$. 
In accordance with \eqref{pp}, we always assume that neither asset can be shorted, so that the maximization problem is established over all admissible strategies defined as follows.

\begin{definition}
 Let $S^Z:=\frac{Z^1}{Z^0}$, for some $Z\in \cZ^\lambda_{sup}$. 
 A self-financing strategy $\widetilde{\varphi}$ is admissible in the market driven by $S^Z$, if we have
  $$ \widetilde{\varphi}_t^0\geq 0 \,\mbox{ and }\, \widetilde{\varphi}_t^1\geq 0,\quad a.s., $$
  for all $0\leq t\leq T$.
  
 We denote by $\mathcal{A}^Z(x)$ the collection of all such admissible trading strategies starting from $(x, 0)$.  
 Moreover, we define
  $$ \mathcal{C}^Z(x):=\left\{\widetilde{g}\in L^0_+ \,\big|\, \wt g\leq \widetilde{\varphi}^0_T+\widetilde{\varphi}^1_TS^Z_T,\,\mbox{ for some }\,(\widetilde{\varphi}^0,\widetilde{\varphi}^1)\in\mathcal{A}^Z(x)\right\}. $$
\end{definition}

\begin{lemma}
 The payoff of a trading strategy in the market $S$ with transaction costs can be dominated by some outcomes from trading in the potentially more favorable frictionless markets driven by $S^Z$, $Z\in\cZ^\lambda_{sup}$.
 %  within the bid-ask spread.
 Namely, fix $Z\in\cZ^\lambda_{sup}$ and let $(\varphi^0,\varphi^1)\in\cA^\lambda(x)$ be arbitrary,
   then there exists a $(\wt\varphi^0, \wt\varphi^1)\in\cA^Z(x)$ such that 
    \begin{equation} \label{no-short}
      \wt\varphi^0_t \geq \varphi^0_t \,\mbox{ and }\, \wt\varphi^1_t \geq \varphi^1_t,  \quad a.s.,
    \end{equation}
   for all $\tT$.
\end{lemma}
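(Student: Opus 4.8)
The plan is to leave the stock position untouched and to let the bond account be determined by the frictionless budget equation, so that the componentwise gain comes entirely from the more favourable execution prices. Concretely, I would set $\wt\varphi^1 := \varphi^1$ and define
\[
 \wt\varphi^0_t := x + \int_0^t \varphi^1_{u-}\,dS^Z_u - \varphi^1_t S^Z_t, \qquad 0\le t\le T,
\]
where the gains are read off with the left-continuous version of the position (the standard convention for the number of shares held). Then $\wt\varphi^0_t + \wt\varphi^1_t S^Z_t = x + \int_0^t \varphi^1_{u-}\,dS^Z_u$, so $\wt\varphi=(\wt\varphi^0,\wt\varphi^1)$ is self-financing in the frictionless market associated with $S^Z$ and starts from $(x,0)$; the $S^Z$-integrability of $\wt\varphi^1=\varphi^1$ is routine since $\varphi^1$ is predictable of finite variation. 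With this choice the second inequality in \eqref{no-short} holds trivially, with equality.

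For the first inequality I would compute the difference $\wt\varphi^0-\varphi^0$ explicitly. From the self-financing property of $\varphi$ one has $d\varphi^0_u = (1-\lambda)S_u\,d\varphi^{1,\downarrow}_u - S_u\,d\varphi^{1,\uparrow}_u$ with $\varphi^0_0=x$, and the integration-by-parts formula (using $\varphi^1_0=0$) gives
\[
 \varphi^1_t S^Z_t = \int_0^t \varphi^1_{u-}\,dS^Z_u + \int_0^t S^Z_u\,d\varphi^1_u .
\]
Substituting both identities into the definition of $\wt\varphi^0_t$, the stochastic integral $\int_0^t \varphi^1_{u-}\,dS^Z_u$ cancels, and after writing $d\varphi^1 = d\varphi^{1,\uparrow}-d\varphi^{1,\downarrow}$ and regrouping the four remaining pathwise integrals by execution price one obtains
\[
 \wt\varphi^0_t - \varphi^0_t = \int_0^t \big(S_u - S^Z_u\big)\,d\varphi^{1,\uparrow}_u + \int_0^t \big(S^Z_u - (1-\lambda)S_u\big)\,d\varphi^{1,\downarrow}_u .
\]

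The conclusion is then immediate from the defining property \eqref{J10}: since $(1-\lambda)S_u\le S^Z_u\le S_u$, both integrands are nonnegative, and since $\varphi^{1,\uparrow}$ and $\varphi^{1,\downarrow}$ are increasing, both integrals are nonnegative. Hence $\wt\varphi^0_t\ge\varphi^0_t\ge 0$ for every $0\le t\le T$, which gives the first inequality in \eqref{no-short} and, together with $\wt\varphi^1=\varphi^1\ge 0$, the admissibility $\wt\varphi\in\cA^Z(x)$. The step I expect to require the most care is the integration-by-parts bookkeeping: one must ensure that the stochastic-integral conventions match so that the gains term cancels exactly (this is where the left-continuous version of $\varphi^1$ is used), and that the quadratic-covariation term $[\varphi^1,S^Z]$ is correctly absorbed into the pathwise integral $\int_0^t S^Z_u\,d\varphi^1_u$ at the jumps of $\varphi^1$. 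Once this is settled, the remaining regrouping and sign analysis are purely algebraic.
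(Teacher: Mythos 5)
Your proposal is correct and is essentially the paper's own proof: you define the identical strategy $\wt\varphi^1:=\varphi^1$, $\wt\varphi^0_t:=x+\int_0^t\varphi^1_{u-}\,dS^Z_u-\varphi^1_tS^Z_t$, and verify $\wt\varphi^0\ge\varphi^0$ from integration by parts, the $\lambda$-self-financing condition, and the spread bound \eqref{J10}. The only difference is presentational—you expand $\wt\varphi^0_t-\varphi^0_t$ into the two explicit nonnegative integrals against $\varphi^{1,\uparrow}$ and $\varphi^{1,\downarrow}$, whereas the paper packages the same sign information as $\int_0^t(d\varphi^0_u+S^Z_u\,d\varphi^1_u)\le 0$.
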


\begin{proof}
 For any $(\varphi^0, \varphi^1)\in\cA^\lambda(x)$, since $(\varphi^0,\varphi^1)$ is a $\lambda$-self-financing trading strategy, we have that
  \begin{align*}
   \varphi^0_t+ \varphi^1_t S_t^Z &= x + \int_0^t d \varphi^0_u + \int_0^t \varphi^1_{u}dS^Z_u +\int_0^t S^Z_u d\varphi^1_{u} \\
                                  &= x + \int_0^t (d \varphi^0_u +S^Z_u d\varphi^1_{u})+ \int_0^t \varphi^1_{u}d S^Z_u 
                                \leq x +  \int_0^t \varphi^1_{u}d S^Z_u. 
  \end{align*}
  We remark that we adopt throughout this paper the notion of stochastic integrals in \cite[Section 7]{CS16strong}. Then, define a self-financing trading strategy in the frictionless market associated with $S^Z$ by
  \begin{align*}
    \left\{\begin{array}{l} \displaystyle
            \wt\varphi_t^0 := x + \int_0^t \varphi^1_{u}d S^Z_u - \varphi^1_{t} S^Z_t \geq \varphi_t^0, \\
            \wt\varphi_t^1 := \varphi_t^1, 
           \end{array} 
    \right. 
   \end{align*} 
   which satisfies \eqref{no-short}.
 \end{proof}
 
Obviously, $\cC^\lambda(x)\subseteq\cC^Z(x)$, for any $Z\in\cZ^\lambda_{sup}$. 
Therefore, letting 
   $$ u^Z(x;e_T):=\sup_{\widetilde{g}\in \cC^Z(x)}\massE [U(\widetilde{g}+e_T)], $$ 
  it follows that 
   $$ u(x;e_T)\leq \inf_{Z\in \cZ^\lambda_{sup}} u^Z(x;e_T), $$
  which means each frictionless market with $S^Z$ affords better, at least not worse, investment opportunity than the  market with transaction costs. 
An interesting question is whether there exists a least favorable $\widehat{Z}\in \cZ^\lambda_{sup}$, such that the {gap is closed}, i.e., the inequality becomes equality. 
If so, the corresponding price process $S^{\widehat{Z}}:=\frac{\widehat{Z}^1}{\widehat{Z}^0}$ is called shadow price. 
Below is the definition of the shadow price similar to \cite[Definition 3.9]{BCKMK13}.

\begin{definition}
 Fix the initial value $x$ and the terminal random endowment $e_T$. 
 We assume that short selling of either asset is not allowed. 
 Then, the process $S^{\widehat{Z}}$ associated with some $\widehat{Z}:=\widehat{Z}(x, e_T)\in\cZ^\lambda_{sup}$ is called a shadow price process, if 
  $$ \sup_{g\in \cC^\lambda(x)}\massE[U(g+e_T)] = \sup_{\widetilde{g}\in\cC^{\widehat{Z}}(x)}\massE[U(\widetilde{g}+e_T)]. $$ 
\end{definition}

\section{Solvability of the problem and existence of shadow prices} 
%{\color{cyan} (The usefulness of positive random endowments is dubious: at first sight they just look
%an inheritance from a rich uncle who suddenly dies. A more scholarly justification would be
%necessary. Remark 3.3: it would be useful to point out where exactly positivity of $e_T$ is used. One instance is (3.9) another is at the bottom of p. 18. The authors should clarify this.)}
%
%{\color{Magenta} Junjian: We have alrealy clarify this in Remark 3.13.}
In this section, we shall present our main result, that is, the solvability of \eqref{pp} and the existence of shadow prices. 
\subsection{Main theorems}

% \textcolor{red}{Without the no-short-selling constraint and random endownment, problems similar to \eqref{pp} and the existence of shadow prices have been studied in \cite{KMK11,CMKS14, CS15duality, CSY15} by duality methods. }
% By contrast, we shall solve \eqref{pp} directly by following the line of \cite{BCKMK13}.  

The existence of shadow prices for the utility maximization problem with neither the no-short-selling constraint nor random endowment has been studied in \cite{KMK11,CMKS14, CS15duality, CSY15, CPSY16} by duality methods. 
By contrast, we shall solve \eqref{pp} directly by following the line of \cite{BCKMK13}. 

First of all, we display a superreplication theorem as an analogue of \cite[Lemma 4.1]{BCKMK13}: 

\begin{lemma} \label{suprep}
 For any $Z\in \mathcal{Z}^\lambda_{sup}$, the process $Z^0\varphi^0+ Z^1\varphi^1$ is a positive supermartingale, 
   for any $(\varphi^0, \varphi^1)\in\cA^\lambda(x)$. 
\end{lemma}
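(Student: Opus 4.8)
The plan is to write $V := Z^0\varphi^0 + Z^1\varphi^1$ as the sum of a local supermartingale and a nonincreasing process, and then upgrade this to a genuine supermartingale using positivity. Nonnegativity of $V$ is immediate, since $Z^0, Z^1 > 0$ while $\varphi^0_t, \varphi^1_t \geq 0$ for every $\tT$ by admissibility. The starting point is integration by parts on each product. Because $\varphi^0$ and $\varphi^1$ are predictable and of finite variation, the covariation brackets combine with the left-limit integrands, and one obtains
\[
 V_t = V_0 + \int_0^t \varphi^0_{u-}\, dZ^0_u + \int_0^t \varphi^1_{u-}\, dZ^1_u + \int_0^t Z^0_u\, d\varphi^0_u + \int_0^t Z^1_u\, d\varphi^1_u,
\]
where the identity $\int_0^t Z^i_{u-}\,d\varphi^i_u + [Z^i,\varphi^i]_t = \int_0^t Z^i_u\,d\varphi^i_u$ (valid since each $\varphi^i$ has finite variation) has been used to absorb the bracket terms into the Stieltjes integrals. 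I write $A_t$ for the sum of the last two integrals.

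The two stochastic integrals together form a local supermartingale. Indeed, the integrands $\varphi^0_{u-}, \varphi^1_{u-}$ are nonnegative, predictable and locally bounded, being left limits of c\`adl\`ag nonnegative processes, and integrating such a process against a supermartingale $Z^i$ yields a local supermartingale. (Equivalently, with the decomposition $Z^i = M^i - B^i$ into a local martingale and a predictable nondecreasing process, one notes that $\int \varphi^i_{u-}\,dB^i_u$ is nondecreasing.)

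The heart of the argument is to show that $A$ is nonincreasing. Using the Jordan-Hahn decompositions together with the self-financing relations $d\varphi^{0,\uparrow} = (1-\lambda)S\,d\varphi^{1,\downarrow}$ and $d\varphi^{0,\downarrow} = S\,d\varphi^{1,\uparrow}$, I would substitute $d\varphi^0 = (1-\lambda)S\,d\varphi^{1,\downarrow} - S\,d\varphi^{1,\uparrow}$ and $d\varphi^1 = d\varphi^{1,\uparrow} - d\varphi^{1,\downarrow}$ into $dA$ and regroup according to the two monotone measures:
\[
 dA_t = \big(Z^1_t - S_t Z^0_t\big)\, d\varphi^{1,\uparrow}_t + \big((1-\lambda)S_t Z^0_t - Z^1_t\big)\, d\varphi^{1,\downarrow}_t.
\]
The bid-ask bounds $(1-\lambda)S_t \leq S^Z_t = Z^1_t/Z^0_t \leq S_t$, multiplied by $Z^0_t > 0$, make both coefficients nonpositive, while $d\varphi^{1,\uparrow}_t, d\varphi^{1,\downarrow}_t \geq 0$; hence $dA_t \leq 0$.

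Combining the three paragraphs, $V$ is a nonnegative local supermartingale. A final standard step, namely conditional Fatou along a localizing sequence $\tau_n \uparrow \infty$, upgrades it to a true supermartingale and simultaneously delivers integrability, since $\massE[V_t] \leq \massE[V_0] = Z^0_0\, x < \infty$. The step demanding the most care is the integration-by-parts bookkeeping in the presence of jumps: one must ensure that the right-continuous versions $Z^i_u$, rather than $Z^i_{u-}$, appear in the $d\varphi^i$ integrals, so that the pointwise bid-ask bounds can be applied directly to the integrands of $dA$. The regrouping and the passage from a local to a genuine supermartingale are then routine.
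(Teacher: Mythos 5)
Your proof is correct and follows essentially the same route as the paper: integration by parts in the form of \cite[Proposition I.4.49]{JS03}, positivity of $\varphi^0_{-},\varphi^1_{-}$ to handle the $dZ$ integrals, and the self-financing relations combined with the bid-ask bounds $(1-\lambda)S_t Z^0_t \le Z^1_t \le S_t Z^0_t$ to show the Stieltjes part is nonincreasing. If anything, you are more careful than the paper at the final step, where you explicitly pass from a nonnegative local supermartingale to a true supermartingale via conditional Fatou, a point the paper's proof leaves implicit.
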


\begin{proof}
 As $(\varphi^0,\varphi^1)$ is of finite variation and $(Z^0,Z^1)$ is a supermartingale, we obtain that
  \begin{align*}
    Z^0_t\varphi^0_t+Z^1_t\varphi^1_t &= (Z^0_0\varphi^0_0+Z^1_0\varphi^1_0)+\int_0^t(\varphi^0_{u}d Z_u^0 + \varphi^1_{u}d Z_u^1)+\int_0^t(Z_u^0d\varphi^0_{u}+Z_u^1d\varphi^1_{u}) \\
                                      &= x + \int_0^t(\varphi^0_{u}d Z_u^0 + \varphi^1_{u}d Z_u^1)+\int_0^t(Z_u^0d\varphi^0_{u}+Z_u^1d\varphi^1_{u}).
  \end{align*}
 The first integral defines a supermartingale due to the positivity of $\varphi^0$ and $\varphi^1$. 
 The second integral defines a decreasing process by the fact that $(\varphi^0,\varphi^1)$ is $\lambda$-self-financing and that $\frac{Z^1_u}{Z^0_u}$ takes values in $[(1-\lambda)S_u,S_u]$. 
 Therefore, the process $Z^0\varphi^0+ Z^1\varphi^1$ is a positive supermartingale. 
\end{proof}

\begin{remark}
Comparing with \cite[Theorem 1.4]{Sch14super}, we require less on the underlying asset price $S$ for the superreplication theorem, since we are working with a smaller set of trading strategies.
\end{remark}

Furthermore, we have some properties of the convex sets $\cA^\lambda(x)$ and $\cC^\lambda(x)$ as follows.

\begin{lemma}\label{TVb}
 Under Assumption \ref{Sassumption} and \ref{Scps}, the total variation ${\rm Var}(\varphi^0)$ and ${\rm Var}(\varphi^1)$ remain bounded in $L^0$, when $\varphi$ runs through $\cA^\lambda(x)$.
\end{lemma}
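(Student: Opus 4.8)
The plan is to exploit the \emph{friction gap} between $\lambda$ and the smaller transaction-cost parameter $\lambda'$ furnished by Assumption~\ref{Scps}, using a fixed $\lambda'$-supermartingale-CPS $Z=(Z^0,Z^1)\in\cZ^{\lambda'}_{sup}$ as a gauge. For $\varphi=(\varphi^0,\varphi^1)\in\cA^\lambda(x)$ I would first rerun the integration-by-parts computation of Lemma~\ref{suprep}, but now pairing the $\lambda$-self-financing strategy against the $\lambda'$-CPS. Writing $V_t:=Z^0_t\varphi^0_t+Z^1_t\varphi^1_t\ge0$, one gets $V=V_0+M+A$ with $V_0=Z^0_0x$, where $M$ is a local supermartingale and $A$ is the decreasing ``friction'' term. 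The point of taking $\lambda'<\lambda$ is that, since $S^Z\in[(1-\lambda')S,S]$, the coefficient multiplying $d\varphi^{1,\downarrow}$, namely $Z^0_u\big((1-\lambda)S_u-S^Z_u\big)$, is now strictly negative with a quantitative gap, giving
\begin{equation*}
 -A_T\ \ge\ (\lambda-\lambda')\int_0^T Z^0_uS_u\,d\varphi^{1,\downarrow}_u\ \ge\ 0.
\end{equation*}

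Next I would extract a uniform bound. Since $V_T\ge0$ and $-A_T\ge0$, the local supermartingale $M$ is bounded below by $-V_0$, hence is a genuine supermartingale with $\massE[M_T]\le0$; taking expectations in $0\le V_T=V_0+M_T+A_T$ yields $\massE[-A_T]\le V_0=Z^0_0x$. Consequently the weighted sales $\int_0^T Z^0_uS_u\,d\varphi^{1,\downarrow}_u$ are bounded in $L^1$, uniformly over $\varphi\in\cA^\lambda(x)$, and in particular bounded in $L^0$.

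The decisive step is to pass from these \emph{weighted} quantities to the \emph{unweighted} total variations, i.e.\ to remove the random weights $Z^0_uS_u$ and $S_u$. Here I would use that $Z,S$ are fixed: $Z^0$ and $Z^1$ are strictly positive c\`adl\`ag supermartingales (as required of a CPS), so their paths have a.s.\ strictly positive infima $\alpha:=\inf_\tT Z^0_t>0$ and $\beta:=\inf_\tT Z^1_t>0$ (a nonnegative supermartingale that is everywhere positive has positive left limits, whence its infimum over the compact interval is positive), while $\gamma:=\sup_\tT Z^0_t<\infty$ by c\`adl\`ag-ness. Because $S\ge S^Z=Z^1/Z^0$, this also forces $\inf_\tT S_t\ge\beta/\gamma>0$ a.s. With these a.s.\ positive random constants in hand I can divide: from $Z^0_uS_u\ge Z^1_u\ge\beta$ the $L^0$-bound on $\int Z^0_uS_u\,d\varphi^{1,\downarrow}$ transfers to one on the downward variation $\int_0^T d\varphi^{1,\downarrow}_u$, and dividing instead by $\alpha$ controls $\int_0^T S_u\,d\varphi^{1,\downarrow}_u$.

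Finally I would bound the purchases via the budget constraint. The no-short-selling condition $\varphi^0_T\ge0$ together with the self-financing identities gives $\int_0^T S_u\,d\varphi^{1,\uparrow}_u=\varphi^{0,\downarrow}_T\le x+\varphi^{0,\uparrow}_T=x+(1-\lambda)\int_0^T S_u\,d\varphi^{1,\downarrow}_u$, which is $L^0$-bounded by the previous step; dividing by $\inf_\tT S_t>0$ then bounds $\int_0^T d\varphi^{1,\uparrow}_u$. Since ${\rm Var}(\varphi^1)=\int_0^T\big(d\varphi^{1,\uparrow}_u+d\varphi^{1,\downarrow}_u\big)$ and ${\rm Var}(\varphi^0)=(1-\lambda)\int_0^T S_u\,d\varphi^{1,\downarrow}_u+\int_0^T S_u\,d\varphi^{1,\uparrow}_u$, and boundedness in $L^0$ (i.e.\ in probability) is preserved under finite sums and multiplication by the fixed a.s.-finite weights, both total variations remain bounded in $L^0$. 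I expect the main obstacle to be precisely this weighted-to-unweighted transfer: the a.s.\ positivity of $\inf_\tT Z^i_t$ and of $\inf_\tT S_t$ must be argued carefully, since strict positivity of a c\`adl\`ag path alone does not preclude a zero infimum, and it is the supermartingale structure (positive left limits) together with $S\ge Z^1/Z^0$ that saves the day.
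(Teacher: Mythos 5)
Your proof is correct, and it rests on the same essential mechanism as the paper's: fix $Z\in\cZ^{\lambda'}_{sup}$ from Assumption~\ref{Scps} and let the gap $\lambda-\lambda'$ convert the supermartingale property of $Z^0\varphi^0+Z^1\varphi^1$ into a uniform $L^1$-bound on cumulative sales, which is then deweighted to a bound on the variations. The execution, however, is genuinely different in its bookkeeping. The paper does not re-derive the integration by parts with a quantified friction term: it absorbs the gap into a modified strategy $\widetilde\varphi_t:=\bigl(\varphi^0_t+\tfrac{\lambda-\lambda'}{1-\lambda}\varphi^{0,\uparrow}_t,\ \varphi^1_t\bigr)$, observes that $\widetilde\varphi\in\cA^{\lambda'}(x)$, and applies Lemma~\ref{suprep} as a black box to obtain $\tfrac{\lambda-\lambda'}{1-\lambda}\massE\bigl[Z^0_T\varphi^{0,\uparrow}_T\bigr]\le x$, i.e.\ a bound on the \emph{terminally} weighted sales proceeds; the whole remaining deweighting argument is delegated to \cite[Lemma~3.1]{Sch14super}. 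You instead keep $\varphi$ unchanged, decompose $Z^0\varphi^0+Z^1\varphi^1=V_0+M+A$, and bound the friction term, $-A_T\ge(\lambda-\lambda')\int_0^T Z^0_uS_u\,d\varphi^{1,\downarrow}_u$, which controls the \emph{running}-weighted sales; you then carry out the weighted-to-unweighted transfer explicitly. What your route buys is self-containedness exactly where the paper is silent: the a.s.\ strict positivity of $\inf_\tT Z^0_t$ and $\inf_\tT Z^1_t$ (via the absorption-at-zero property of nonnegative supermartingales) and the consequent bound $\inf_\tT S_t\ge\beta/\gamma>0$ are the delicate points, and you are right that Assumption~\ref{Sassumption} alone would not preclude a vanishing left limit of $S$, so deriving $\inf_\tT S_t>0$ from the CPS is the correct move. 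What the paper's packaging buys is brevity: no stochastic calculus beyond what Lemma~\ref{suprep} already provides, and a terminal weight $Z^0_T>0$ that is slightly easier to remove than your path infima. Both versions finish the same way, via the budget constraint $\varphi^{0,\downarrow}_T\le x+\varphi^{0,\uparrow}_T$ and division by fixed a.s.-positive random constants.
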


\begin{proof}
 Write $\varphi^0=\varphi^{0,\uparrow}-\varphi^{0,\downarrow}$ and $\varphi^1=\varphi^{1,\uparrow}-\varphi^{1,\downarrow}$ as the canonical differences of increasing processes. 
 Then, we could define a strategy $\widetilde{\varphi}\in\mathcal{A}^{\lambda'}(x)$ by 
   $$ \widetilde{\varphi}_t:=\left(\varphi^0_t+\frac{\lambda-\lambda'}{1-\lambda}\varphi^{0, \uparrow}_t, \varphi^1_t\right), \quad 0\leq t\leq T, $$ 
  and prove by Lemma \ref{suprep} that for $Z\in \cZ^{\lambda'}_{sup}$, 
   $$ \frac{\lambda-\lambda'}{1-\lambda}\massE\left[Z^0_T\varphi^{0, \uparrow}_T\right]\leq \massE[Z^0_T\varphi^0_T+Z^1_T\varphi^1_T]+\frac{\lambda-\lambda'}{1-\lambda}\massE\left[Z^0_T\varphi^{0, \uparrow}_T\right]\leq x. $$
   Inspired by the proof of \cite[Lemma 3.2]{CS06}, we construct a probability measure $\mathbf{Q}$ by defining 
   $$
   \frac{d\mathbf Q}{d\mathbf P}=\frac{g}{\mathbf{E}[g]},
   $$
   where $g:=\frac{\inf_{t\in [0, T]}Z^0_t}{\mathbf{E}[Z^0_T]}>0$. Then, we have 
   $$ \frac{\lambda-\lambda'}{1-\lambda}\massE^\mathbf{Q}\left[\varphi^{0, \uparrow}_T\right] \leq \frac{x}{\mathbf{E}[g]\mathbf{E}[Z_T^0]}. $$
 The reminder of the proof is identical with the one of \cite[Lemma 3.1]{Sch14super}.
\end{proof}
Then, we state the following lemma without proof and refer the reader to \cite[Theorem 3.4]{Sch14super}.
\begin{lemma} \label{l1}
 Under Assumptions \ref{Sassumption} and \ref{Scps}, the set $\cC^\lambda(x)$ is convex closed and bounded in $L^0_+$.
\end{lemma}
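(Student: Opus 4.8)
The plan is to establish the three properties of $\cC^\lambda(x)$ in turn; convexity and boundedness follow quickly from what precedes, whereas closedness is the genuine difficulty. For convexity, given $g_1,g_2\in\cC^\lambda(x)$ with $g_i\le\varphi^{0,i}_T$ for admissible $(\varphi^{0,i},\varphi^{1,i})$ and $\alpha\in[0,1]$, I would set $\chi^1:=\alpha\varphi^{1,1}+(1-\alpha)\varphi^{1,2}$ and complete it to a self-financing strategy $\chi$. Writing the self-financing relation as $d\chi^0=-S\,d\chi^1-\lambda S\,d\chi^{1,\downarrow}$ and integrating by parts, the terminal bond holding equals $x+\int_0^T\chi^1_{u-}\,dS_u-\lambda\int_0^T S_u\,d\chi^{1,\downarrow}_u$, which is an affine functional of the stock holding minus $\lambda$ times the convex total-cost functional; hence $\varphi^1\mapsto\varphi^0_T$ is concave. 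Consequently $\chi^0_T\ge\alpha\varphi^{0,1}_T+(1-\alpha)\varphi^{0,2}_T\ge\alpha g_1+(1-\alpha)g_2$, and since all positions stay nonnegative and $\chi$ liquidates at $T$, the convex combination lies in $\cC^\lambda(x)$.

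For boundedness, fix $Z\in\cZ^\lambda_{sup}$, which is nonempty since $\cZ^{\lambda'}_{sup}\subseteq\cZ^\lambda_{sup}$ under Assumption~\ref{Scps}. By Lemma~\ref{suprep}, $Z^0\varphi^0+Z^1\varphi^1$ is a positive supermartingale, so $\massE[Z^0_T\varphi^0_T]\le Z^0_0 x$ because $\varphi^1_T=0$. Since $Z^0_T>0$ a.s., a Chebyshev estimate on $\{Z^0_T\ge\delta\}$ combined with the uniform smallness of $\massP[Z^0_T<\delta]$ shows that $\{\varphi^0_T:\varphi\in\cA^\lambda(x)\}$ is bounded in probability; as $g\le\varphi^0_T$ this gives boundedness of $\cC^\lambda(x)$ in $L^0_+$. (Equivalently one may use $\varphi^0_T\le x+{\rm Var}(\varphi^0)$ together with the preceding lemma.)

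Closedness is the main step. Let $g_n\in\cC^\lambda(x)$ with $g_n\to g$ in probability, and pick $(\varphi^{0,n},\varphi^{1,n})\in\cA^\lambda(x)$ with $g_n\le\varphi^{0,n}_T$. By the preceding lemma the total variations ${\rm Var}(\varphi^{0,n})$ and ${\rm Var}(\varphi^{1,n})$ are bounded in $L^0$, which is exactly the hypothesis needed for a Koml\'os-type convex-compactness argument for predictable finite-variation processes, as in the proof of \cite[Theorem~3.4]{Sch14super}. Passing to forward convex combinations $\tilde\varphi^n\in\conv(\varphi^n,\varphi^{n+1},\dots)$, which remain in the convex set $\cA^\lambda(x)$ and still dominate the corresponding convex combinations of the $g_n$ (the latter still converging to $g$), I would extract a limiting predictable finite-variation process $\varphi=(\varphi^0,\varphi^1)$ with $\tilde\varphi^n\to\varphi$ a.s.

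It then remains to check $\varphi\in\cA^\lambda(x)$ and $g\le\varphi^0_T$. Nonnegativity of $\varphi^0,\varphi^1$, the initial condition $(x,0)$, and the terminal liquidation $\varphi^1_T=0$ all survive the a.s. limit. The delicate point, which I expect to be the main obstacle, is that the self-financing constraint must pass to the limit even though the transaction-cost functional $\varphi^1\mapsto\int_0^T S\,d\varphi^{1,\downarrow}$ is only lower semicontinuous under this mode of convergence. Using the representation above, the affine stochastic-integral term converges while the cost term is lower semicontinuous, so the self-financing completion $\bar\varphi^0$ of $\varphi^1$ satisfies $\limsup_n\tilde\varphi^{0,n}_T\le\bar\varphi^0_T$; combined with $g\le\liminf_n\tilde\varphi^{0,n}_T$ via Fatou this yields $g\le\bar\varphi^0_T$. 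Since $\bar\varphi^0\ge\varphi^0\ge0$, the strategy $(\bar\varphi^0,\varphi^1)$ is admissible and witnesses $g\in\cC^\lambda(x)$. The careful bookkeeping of this lower-semicontinuity step, together with the convergence at the terminal time $t=T$, is where the real work lies and is precisely what is carried out in \cite[Theorem~3.4]{Sch14super}.
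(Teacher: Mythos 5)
Your proposal matches the paper's treatment of Lemma~\ref{l1}: the paper states this lemma \emph{without proof} and simply refers the reader to \cite[Theorem~3.4]{Sch14super}, which is exactly where you place the genuine difficulty (closedness in probability, handled via the bounded-total-variation lemma and a Koml\'os-type compactness argument for predictable finite-variation processes). Your supplementary arguments for the routine parts are correct as well --- convexity via completing the averaged stock position and the concavity of $\varphi^1\mapsto\varphi^0_T = x+\int_0^T\varphi^1_{u-}\,dS_u-\lambda\int_0^T S_u\,d\varphi^{1,\downarrow}_u$, and $L^0$-boundedness via the supermartingale estimate $\massE[Z^0_T\varphi^0_T]\le Z^0_0x$ from Lemma~\ref{suprep} together with a Chebyshev argument --- so the proposal is sound and takes essentially the same route as the paper.
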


In what follows, we shall establish the existence and uniqueness result for the primal solution of \eqref{pp}.
In the frictionless case, a similar result without random endowment has been proved in \cite{Sch04LN} (compare also \cite{Gua02} and \cite{BCKMK13}). The author of \cite{Sch04LN} applied a technical lemma (\cite[Lemma 3.16]{Sch04LN}) to conduct a proof by contradiction and show that the limit of a maximizing sequence indeed solves the utility maximization problem.  We find that the assumption $\{f_n\}_{n\in\mathbb{N}}\geq 0$ in \cite[Lemma 3.16]{Sch04LN} is not essentially needed for proceeding the argument. Thus, 
 %of the existence and uniqueness of the primal optimizer (Theorem \ref{MT1}), 
 we reorganize a lemma in the Appendix and for the sake of completeness, we also give the proof of the following theorem. 
 %so that we can prove Theorem \ref{MT1}.
%
%The {\color{red} idea} of the proof is revealed in \cite{Sch04LN} .
% However, we observe that the positivity of $(f_n)_{n\in\mathbb{N}}$ in \cite[Lemma 3.16]{Sch04LN} is not essentially needed for the proof of the existence. 
% Thus, {\color{red}the proof is postponed to the Appendix, Lemma \ref{lemnew}}.
%
\begin{theorem} \label{MT1}
 Let Assumptions \ref{Sassumption}, \ref{Scps}, \ref{re} and \ref{U(x)assumption} hold. 
 Assume moreover that $$ u(x; e_T)<\infty. $$ 
 Then, the utility maximization problem \eqref{pp} admits a unique solution $\widehat{g}\in\cC^\lambda(x).$
\end{theorem}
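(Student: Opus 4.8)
The plan is to realize the optimizer as an almost sure limit of forward convex combinations of a maximizing sequence, exploiting that $\cC^\lambda(x)$ is convex, closed and bounded in $L^0_+$ (Lemma \ref{l1}) together with an upper semicontinuity of the functional $g\mapsto\massE[U(g+e_T)]$ coming from the Inada and RAE hypotheses. Uniqueness is immediate, and I would dispose of it first: if $\widehat g_1,\widehat g_2$ were two optimizers, then $\tfrac12(\widehat g_1+\widehat g_2)\in\cC^\lambda(x)$ by convexity, and if $\widehat g_1\neq\widehat g_2$ on a set of positive probability, strict concavity of $U$ would force
$$ \massE\Big[U\big(\tfrac12(\widehat g_1+\widehat g_2)+e_T\big)\Big] > \tfrac12\massE[U(\widehat g_1+e_T)]+\tfrac12\massE[U(\widehat g_2+e_T)]=u(x;e_T), $$
which is impossible; hence $\widehat g_1=\widehat g_2$ a.s.

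For existence I would first choose a maximizing sequence $(g_n)_{n}\subset\cC^\lambda(x)$, i.e. $\massE[U(g_n+e_T)]\to u(x;e_T)$. Since $\cC^\lambda(x)$ is bounded in $L^0_+$ by Lemma \ref{l1}, a Koml\'os-type lemma for $L^0$-bounded sequences (the statement reorganized in the Appendix, following \cite[Lemma 3.16]{Sch04LN}) provides convex weights $\alpha^n_k$ with $h_n=\sum_k\alpha^n_k g_k\in\conv(g_n,g_{n+1},\dots)$ converging almost surely to some $\widehat g\in L^0_+$. Convexity of $\cC^\lambda(x)$ gives $h_n\in\cC^\lambda(x)$, and its closedness in $L^0_+$ then yields $\widehat g\in\cC^\lambda(x)$, so $\widehat g$ is feasible. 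Moreover, by concavity of $U$ one has $\massE[U(h_n+e_T)]\geq\sum_k\alpha^n_k\,\massE[U(g_k+e_T)]$, whose right-hand side is a convex combination of tail terms of a sequence converging to $u(x;e_T)$; combined with the feasibility bound $\massE[U(h_n+e_T)]\leq u(x;e_T)$, this shows the convex combinations remain maximizing, $\massE[U(h_n+e_T)]\to u(x;e_T)$.

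It then remains to pass to the limit. Splitting $U=U^+-U^-$ and using $h_n\to\widehat g$ a.s. together with continuity of $U$ on $(0,\infty)$ (and $U\equiv-\infty$ on $(-\infty,0]$), Fatou's lemma applied to the nonnegative variables $U^-(h_n+e_T)$ gives $\liminf_n\massE[U^-(h_n+e_T)]\geq\massE[U^-(\widehat g+e_T)]$, while uniform integrability of $\{U^+(h_n+e_T)\}_n$ yields $\massE[U^+(h_n+e_T)]\to\massE[U^+(\widehat g+e_T)]$. Combining the two,
$$ u(x;e_T)=\lim_n\massE[U(h_n+e_T)]\leq\massE[U^+(\widehat g+e_T)]-\massE[U^-(\widehat g+e_T)]=\massE[U(\widehat g+e_T)], $$
and since feasibility of $\widehat g$ gives the reverse inequality, $\widehat g$ is the optimizer; uniqueness was already shown.

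The crux, and the step I expect to be the main obstacle, is the uniform integrability of $\{U^+(h_n+e_T)\}_n$, which is where both $AE(U)<1$ (Assumption \ref{U(x)assumption}) and the standing finiteness $u(x;e_T)<\infty$ must be used. Reasonable asymptotic elasticity provides a sub-linear growth control $U(y)\leq C(1+y^\mu)$ with some $\mu\in(AE(U),1)$, but this bound alone is not enough: one must argue variationally that, if the positive parts concentrated utility mass on sets of vanishing probability where $h_n\to\infty$, then redistributing that mass would generate feasible payoffs of arbitrarily large expected utility, contradicting $u(x;e_T)<\infty$. Carrying this out in the presence of the random endowment $e_T$ is exactly what requires the mild extension of \cite[Lemma 3.16]{Sch04LN}, in which the positivity assumption on the maximizing sequence is dropped; once this abstract lemma is established in the Appendix, the remaining steps above reduce to routine applications of Lemma \ref{l1}.
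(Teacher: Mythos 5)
Your skeleton agrees with the paper's proof: maximizing sequence, forward convex combinations, a Koml\'os-type theorem together with convexity, closedness and $L^0_+$-boundedness of $\cC^\lambda(x)$ (Lemma \ref{l1}) to produce a feasible a.s.\ limit $\widehat g$, and strict concavity of $U$ for uniqueness; all of that is correct and is exactly what the paper does. The genuine gap is that your decisive step is never proved: you reduce everything to the uniform integrability of $\{U^+(h_n+e_T)\}_n$, identify it as the crux, and then defer it to ``the mild extension of \cite[Lemma 3.16]{Sch04LN} established in the Appendix'' without stating or proving any such lemma. That extension \emph{is} the content of this theorem: the paper's Lemma \ref{lemnew} --- an inductive construction, via Markov's inequality and the integrability of $f^+_0$, of a subsequence $f_{n_k}=U(g_{n_k}+e_T)$ and of \emph{disjoint} sets $A_k$ on which $f_{n_k}\geq M$ and $\massE[f_{n_k}{\bf 1}_{A_k}]\geq\alpha'$ --- plus the argument converting a putative deficit $\alpha=u(x;e_T)-\massE[U(\widehat g+e_T)]>0$ into a contradiction. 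Declaring the remaining steps ``routine applications of Lemma \ref{l1}'' inverts where the work lies; note also that even to define $\alpha$ one must first show $\massE[(U(\widehat g+e_T))^+]<\infty$, which the paper obtains from concavity of $u$ and $u(x+1;e_T)<\infty$, a point your UI framing silently presupposes.

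Beyond the deferral, the mechanism you sketch is not the one that the cited lemma most directly supports, and it has an unaddressed obstacle. The paper never proves uniform integrability: it contradicts \emph{maximality}, not finiteness. If $\alpha>0$, it chooses $n,m$ with $\massE[|f_n|{\bf 1}_{\{f_m\geq M\}}]\leq\varepsilon$ and $\massE[f_m{\bf 1}_{\{f_m\geq M\}}]\geq\alpha-\varepsilon$; since $\frac{g_n+g_m}{2}+e_T=\frac12\left(g_n+g_m+2e_T\right)$ and $g_n,e_T\geq0$, the RAE inequality $U(z/2)>\frac{\gamma}{2}U(z)$ for $z\geq x_0$ (\cite[Lemma 6.3]{KS99}) applies on $\{f_m\geq M\}$, while on $\{f_m<M\}$ plain concavity gives $U\big(\tfrac{g_n+g_m}{2}+e_T\big)\geq\tfrac12(f_n+f_m)$; summing yields $\massE\big[U\big(\tfrac{g_n+g_m}{2}+e_T\big)\big]\geq u(x;e_T)+\tfrac{(\gamma-1)\alpha}{2}-\tfrac{\gamma+2}{2}\varepsilon>u(x;e_T)$, contradicting $\frac{g_n+g_m}{2}\in\cC^\lambda(x)$. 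Your variant --- concentration on vanishing-probability sets redistributed to produce ``arbitrarily large expected utility, contradicting $u(x;e_T)<\infty$'' --- requires many-element averages $\frac1N\sum_k g_{n_k}$, and off the concentration sets RAE gives no control whatsoever: there the utility can be arbitrarily negative (nothing rules out $U(0+)=-\infty$ while $\widehat g+e_T$ is small), so one is forced back to precisely the concavity-on-the-complement estimates and disjointness bookkeeping that constitute Lemma \ref{lemnew}. In short, your plan points at the right tools, but the part you postponed is the theorem.
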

%
%{\color{cyan}(Also in the proof of Theorem 3.5: explain why you need x+1 and not just x in the argument for (ii).
%Cite precisely the "Komlos-type theorem" in (i).)}
%
\begin{proof}%[Proof of Theorem \ref{MT1}]
 The uniqueness is trivial due to the strict concavity of $U$. Thus, we only have to show the existence. 
 
 \vspace{2mm}
 
 (i) Since $u(x; e_T)<\infty$, we could find a maximizing sequence for \eqref{pp}, i.e.,
 $$ u(x; e_T)=\lim_{n\to\infty}\massE[U(g_n+e_T)].$$
     By passing to a sequence of convex combinations $\conv(g_n, g_{n+1}, \ldots)$, still denoted by $g_n$, and applying Lemma \ref{l1} as well as the Koml\'os-type theorem (e.g.~\cite[Lemma A1.1a]{DS94}), 
       we may suppose that $g_n$ converges a.s.~to $\widehat{g}\in\mathcal{C}^\lambda(x)$.
   
 \vspace{2mm}  
   
 (ii)  We claim that $(U(\widehat{g}+e_T))^+$ is integrable and thus $\massE[U(\widehat{g}+e_T)]$ exists.
  Without loss of generality we assume that $U(1)=0$. For any $g \in \cC^{\lambda}(x)$,  $g+1 \in \cC^{\lambda}(x+1)$.  
  It is easy to verify that $u$ is still a concave function in $x$ and thus $u(x; e_T)<\infty$ implies $u(x+1; e_T)<\infty$. 
 We suppose  for the sake of contradiction that $(U(\widehat{g}+e_T))^+$ is not integrable, then
 \begin{align*} 
  \massE[(U(\widehat{g}+e_T))^+] <\massE[(U(\widehat{g}+1 +e_T))^+] = \massE[U(\widehat{g}+1 +e_T)] \leq u(x+1) <\infty,
 \end{align*}
which is contradiction.  
  
     % We claim that $\massE[(U(\widehat{g}+e_T))^+]<+\infty$. 
      %If not, we could deduce that $u(x+1; e_T)=\infty$, which is impossible. 

 \vspace{2mm}   
    
 (iii) We now prove that $\widehat{g}$ is the primal optimizer. 
       If not, there exists an $\alpha\in (0, \infty]$ such that 
         $$ \alpha=u(x; e_T)-\massE[U(\widehat{g}+e_T)]. $$
       For each $n$, denote $f_n=U(g_n+e_T)$ and denote $f_0=U(\widehat{g}+e_T)$.
       Fixing $\varepsilon>0$, there exists an ${m'}\in \mathbb{{N}}$, such that for each $n\geq {m'}$, 
\begin{align}\label{exp1} u(x; e_T)-\massE[f_{n}]\leq \varepsilon. \end{align}
       Since $AE(U) < 1$, by \cite[Lemma 6.3]{KS99}, there exists some $\gamma > 1$, such that $U(\frac{x}{2})>\frac{\gamma}{2} U(x)$, for all $x \geq x_0>0$. 
      Note that for each $n\in \mathbb{N}$ and any ${M} >0$, 
    \begin{align*} 
      \massP\big[f_n\geq {M}\big] \leq \massP\left[|f_n-f_0|\geq \frac{{M}}{2}\right]+\massP\left[f^+_0\geq \frac{{M}}{2}\right].
    \end{align*}
        Thus, for any $\delta>0$, we can choose sufficiently large $M>0$ with $U^{-1}(M)\geq 2x_0$ and find a $m_0\geq m'$ such that for any $n\geq m_0$, 
        $\massP[f_n\geq M]\leq\delta.$
        Due to the integrability of $f_{m'}$, for properly chosed $\delta$, 
        $ \massE\big[|f_{m'}|{\bf 1}_{\{f_n\geq M\}}\big]\leq {\varepsilon} $ holds for any $n\geq m_0$.
       From Lemma \ref{lemnew}, we fix  $m>m_0$ such that 
  \begin{align}\label{exp2}  \massE\big[f_m{\bf 1}_{\{f_m\geq M\}}\big]\geq \alpha-{\varepsilon} \quad\mbox{ and }\quad \massE\big[|f_{{m'}}|{\bf 1}_{\{f_m\geq M\}}\big]\leq {\varepsilon}. \end{align}
 Then, 
        \begin{align*}
          \massE\left[U\left(\frac{g_{m'}+g_m}{2}+e_T\right)\right] &= \massE\left[U\left(\frac{g_{m'}+g_m}{2}+e_T\right){\bf 1}_{\{f_m\geq M\}}\right]\\
                                                             &\hspace{5mm}+ \massE\left[U\left(\frac{g_{m'}+g_m}{2}+e_T\right){\bf 1}_{\{f_m< M\}}\right].
        \end{align*}
       Furthermore, due to the positivity of $g_{m'}$, $g_m$ and $e_T$, we have
        \begin{align*}
          \massE\left[U\left(\frac{g_{m'}+g_m}{2}+e_T\right){\bf 1}_{\{f_m\geq M\}}\right]
            \geq  \frac{\gamma}{2}\massE\left[U\left(g_{m'}+g_m+2e_T\right){\bf 1}_{\{f_m\geq M\}}\right]
            \geq  \frac{\gamma}{2}\massE\left[f_m{\bf 1}_{\{f_m\geq M\}}\right]
            %+\frac{\gamma}{2}\massE\left[f_m{\bf 1}_{\{f_m\geq M\}}\right],
        \end{align*}
        and 
        \begin{align*}
          \massE\left[U\left(\frac{g_{m'}+g_m}{2}+e_T\right){\bf 1}_{\{f_m< M\}}\right]
            \geq \frac{1}{2}\massE\left[f_{m'}{\bf 1}_{\{f_m< M\}}\right]+\frac{1}{2}\massE\left[f_m{\bf 1}_{\{f_m< M\}}\right].
        \end{align*}
       Therefore, we can deduce from (\ref{exp1}) and (\ref{exp2}) that 
       \begin{align*}
          \massE\left[U\left(\frac{g_{m'}+g_m}{2}+e_T\right)\right]&\geq \frac{1}{2}\massE\left[f_{m'}{\bf 1}_{\{f_m< M\}}\right]+\frac{1}{2}\massE\left[f_m\right]+\frac{\gamma-1}{2}\massE\left[f_m{\bf 1}_{\{f_m\geq M\}}\right]\\
          &\geq u(x; e_T)+\frac{(\gamma-1)\alpha}{2}-\frac{\gamma+2}{2}\varepsilon.
         \end{align*}
       Letting $\varepsilon\rightarrow 0$, we have 
         $$ \massE\left[U\left(\frac{g_{m'}+g_m}{2}+e_T\right)\right]> u(x; e_T), $$
         which is a contradiction to the maximality of $u(x; e_T)$.
\end{proof}

Now we turn to consider the frictionless market associated with $S^Z$, for $Z\in\cZ^\lambda_{sup}$. 
Similarly to Lemma \ref{suprep}, we have the supermartingale property of $Z^0\widetilde{\varphi}^0+ Z^1\widetilde{\varphi}^1$ for $(\widetilde{\varphi}^0, \widetilde{\varphi}^1)\in \cA^Z(x).$
The subsequent lemma has been reviewed in \cite[Lemma 4.1]{BCKMK13}. 
However, for the convenience of the reader, we prove it in the num\'eraire-based case. 

\begin{lemma} \label{suprepz}
 Fix $Z\in\cZ^\lambda_{sup}$. 
 The process $Z^0\widetilde{\varphi}^0+ Z^1\widetilde{\varphi}^1$ is a positive supermartingale, for any $(\widetilde{\varphi}^0, \widetilde{\varphi}^1)\in \cA^Z(x).$
\end{lemma}

\begin{proof}
 Note that $Z^0\widetilde{\varphi}^0+ Z^1\widetilde{\varphi}^1 = Z^0\big(\wt\varphi^0+\wt\varphi^1S^Z\big) = Z^0\big(x+\wt\varphi^1\sint S^Z\big)$ 
   by the frictionless self-financing condition. 
 Using It\^{o}'s formula and \cite[Proposition A.1]{GK03}, we obtain that 
   \begin{align*}
     Z^0_t\widetilde{\varphi}^0_t+ Z^1_t\widetilde{\varphi}^1_t
      &= xZ^0_0 + \big((\wt\varphi^0_-+\wt\varphi^1_-S_-^Z)\sint Z^0\big)_t + \big(Z_-^0\sint(\wt\varphi^1\sint S^Z)\big)_t + \big[\wt\varphi^1\sint S^Z, Z^0\big]_t \\
      &= xZ^0_0 + \big((\wt\varphi^0_-+\wt\varphi^1_-S_-^Z)\sint Z^0\big)_t + \big(\wt\varphi^1\sint(Z_-^0\sint S^Z)\big)_t 
                + \big(\wt\varphi^1\sint\big[S^Z, Z^0\big]\big)_t. 
   \end{align*}  
 It follows from the frictionless self-financing condition again and \cite[I.4.36]{JS03} that 
   $$ \Delta(\wt\varphi^0+\wt\varphi^1S^Z) = \Delta(\wt\varphi^1\sint S^Z) = \wt\varphi^1\Delta S^Z, $$
   therefore, $\wt\varphi^0_-+\wt\varphi^1_-S^Z_- = \wt\varphi^0 + \wt\varphi^1S^Z_-$.
 By \cite[I.4.37, Definition I.4.45]{JS03}, we obtain 
  \begin{align*}
    Z^0_t\widetilde{\varphi}^0_t+ Z^1_t\widetilde{\varphi}^1_t
      & = xZ^0_0 + \big(\wt\varphi^0\sint Z^0\big)_t + \big(\wt\varphi^1\sint\big(S^Z_-\sint Z^0+Z_-^0\sint S^Z+[S^Z, Z^0]\big)\big)_t \\
      & = xZ^0_0 + \big(\wt\varphi^0\sint Z^0\big)_t + \big(\wt\varphi^1\sint(S^ZZ^0)\big)_t  \\
      & = xZ^0_0 + \big(\wt\varphi^0\sint Z^0\big)_t + \big(\wt\varphi^1\sint Z^1\big)_t, 
  \end{align*}
  which is a positive local supermartingale and hence a supermartingale. 
\end{proof}

Then, it is easy to deduce that for each $Z\in\cZ^\lambda_{sup}$ and $\widetilde{g}\in\cC^Z(x)$, 
 \begin{align} \label{onedir}
   \massE[U(\widetilde{g}+e_T)]\leq \massE[V(Z^0_T)+Z^0_T(\widetilde{g}+e_T)] \leq \massE[V(Z^0_T)]+\massE[Z^0_Te_T]+Z^0_0x.
 \end{align}
Therefore, to prove the existence of shadow prices, it suffices to have the following lemma, whose proof is postponed to the next subsection.

\begin{lemma} \label{condition}
 Let Assumptions \ref{Sassumption}, \ref{Scps}, \ref{re} and \ref{U(x)assumption} hold. 
 There exists a $\widehat{Z}\in \cZ^\lambda_{sup}$, such that 
  \begin{enumerate}[$(i)$]
    \item $\widehat{Z}^0_T=U'(\widehat{g}+e_T)$;
    \item $\massE[\widehat Z^0_T\widehat{g}]=\widehat{Z}^0_0x.$
  \end{enumerate}
\end{lemma}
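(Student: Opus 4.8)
The plan is to extract $\widehat Z$ from the first-order optimality condition satisfied by the primal maximiser $\widehat g$ provided by Theorem \ref{MT1}, and then to realise the resulting gradient density as a genuine $\lambda$-supermartingale-CPS for which the budget constraint binds.

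\emph{Step 1 (first-order condition).} Fix $g\in\cC^\lambda(x)$. By convexity $g_\varepsilon:=(1-\varepsilon)\widehat g+\varepsilon g\in\cC^\lambda(x)$ for $\varepsilon\in[0,1]$, and optimality of $\widehat g$ gives $\massE[U(g_\varepsilon+e_T)]\le\massE[U(\widehat g+e_T)]$. Dividing by $\varepsilon$, the concavity of $U$ makes the difference quotients $\varepsilon^{-1}\big(U(g_\varepsilon+e_T)-U(\widehat g+e_T)\big)$ increase, as $\varepsilon\downarrow0$, to $U'(\widehat g+e_T)(g-\widehat g)$, while their expectations remain $\le 0$. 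I would pass to the limit by monotone convergence, the reasonable asymptotic elasticity (Assumption \ref{U(x)assumption}) being what controls the integrability (in particular $\massE[U'(\widehat g+e_T)\widehat g]<\infty$, using $u(x;e_T)<\infty$) and legitimises the interchange of limit and expectation. This produces
\begin{equation}\label{FOC}
 \massE\big[U'(\widehat g+e_T)(g-\widehat g)\big]\le 0,\qquad \text{for all } g\in\cC^\lambda(x).
\end{equation}
Defining $\widehat Z^0_T:=U'(\widehat g+e_T)>0$ gives $(i)$ by construction, and \eqref{FOC} reads $\massE[\widehat Z^0_T g]\le\massE[\widehat Z^0_T\widehat g]=:m$ for every $g\in\cC^\lambda(x)$.

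\emph{Step 2 (building the CPS).} The displayed inequality says that $\widehat Z^0_T/m$ lies in the polar of the convex, solid, closed and bounded set $\cC^\lambda(x)\subseteq\Lzp$ (Lemma \ref{l1}). Conversely, Lemma \ref{suprep} shows that for any $Z\in\cZ^\lambda_{sup}$ the supermartingale $Z^0\varphi^0+Z^1\varphi^1$ yields $\massE[Z^0_Tg]\le Z^0_0x$ on $\cC^\lambda(x)$ (here $\varphi^1_T=0$), so the renormalised terminal densities $Z^0_T/(Z^0_0x)$ already sit in that polar. Invoking the bipolar theorem in $\Lzp$ (Brannath--Schachermayer) together with the superreplication direction of \cite[Theorem 3.4]{Sch14super}, I would identify the polar of $\cC^\lambda(x)$ with the closed solid convex hull of $\{Z^0_T/(Z^0_0x):Z\in\cZ^\lambda_{sup}\}$. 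Thus $\widehat Z^0_T$ is dominated by $\tfrac mx Z^{*0}_T$ for some $Z^*\in\cZ^\lambda_{sup}$ with $Z^{*0}_0\le1$; the tightness in \eqref{FOC} is then used to upgrade domination to equality and to force $Z^{*0}_0=1$, after which $\widehat Z:=\tfrac mx Z^*$ is a $\lambda$-supermartingale-CPS with terminal $Z^0$-value $\widehat Z^0_T$, initial value $\widehat Z^0_0=m/x$, and $S^{\widehat Z}=\widehat Z^1/\widehat Z^0\in[(1-\lambda)S,S]$ inherited from $Z^*$.

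\emph{Step 3 (binding budget and conclusion).} By the normalisation just obtained, $\widehat Z^0_0x=m=\massE[\widehat Z^0_T\widehat g]$, which is precisely $(ii)$; equivalently, the positive supermartingale $\widehat Z^0\widehat\varphi^0+\widehat Z^1\widehat\varphi^1$ attached to an optimal $\widehat\varphi\in\cA^\lambda(x)$ with $\widehat\varphi^0_T=\widehat g$ and $\widehat\varphi^1_T=0$ loses no mass between $0$ and $T$. As a sanity check, inserting $\widehat Z^0_T=U'(\widehat g+e_T)$ into the Fenchel inequality turns the first estimate in \eqref{onedir} into an equality, and $(ii)$ turns the second into an equality, so the two-sided bound collapses and $S^{\widehat Z}$ is a shadow price. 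I expect Step 2 to be the crux: converting the purely terminal information \eqref{FOC} into an honest supermartingale-CPS rests on the superreplication/bipolar duality for $\cC^\lambda(x)$, and the genuinely delicate point is to argue that the optimal normalisation binds, so that $(ii)$ holds with equality rather than as the mere supermartingale inequality $\massE[\widehat Z^0_T\widehat g]\le\widehat Z^0_0x$.
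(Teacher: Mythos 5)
Your Step 2 contains the genuine gap. First, the duality you invoke --- that the polar of $\cC^\lambda(x)$ is the closed, solid, convex hull of $\{Z^0_T/(Z^0_0x):Z\in\cZ^\lambda_{sup}\}$ --- is established nowhere in the paper and is not an off-the-shelf result: Lemma \ref{l1} (via \cite[Theorem 3.4]{Sch14super}) only gives convexity, closedness and boundedness of $\cC^\lambda(x)$ in $L^0_+$, and proving the bipolar identification in this constrained, random-endowment, supermartingale-CPS setting would amount to a separate superreplication/duality theorem. Second, and more fundamentally, even granting that identification, membership of $\wh Z^0_T/m$ in a closed solid convex hull does \emph{not} produce a single $Z^*\in\cZ^\lambda_{sup}$ dominating it: limits of convex combinations of CPS terminal values are in general not themselves (dominated by) CPS terminal values. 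This is precisely the classical attainment problem --- dual optimizers in both frictional and frictionless duality are typically attained only in an abstract enlargement of the dual domain, not by a consistent price system --- and it is the very reason shadow prices can fail to exist in general. Your subsequent ``upgrade domination to equality and force $Z^{*0}_0=1$'' is therefore unsupported, and with it Steps 2--3 collapse; note also that condition (i) demands \emph{exact} equality $\wh Z^0_T=U'(\wh g+e_T)$ together with a companion process $\wh Z^1$ keeping $\wh Z^1/\wh Z^0$ inside the bid-ask spread on all of $[0,T]$, which is process-level information that terminal-variable polarity cannot deliver.

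The paper avoids duality altogether: following Loewenstein and Benedetti et al., it first proves a dynamic programming principle for the value process $\cU_t(\wh\varphi^0_t,\wh\varphi^1_t)$ along the optimal strategy, then defines $\wt Z^0_t,\wt Z^1_t$ as right derivatives of $\cU_t$ with respect to the bond and stock positions (with $\wt Z^0_T:=U'(\wh\varphi^0_T+e_T)$ at the terminal time), and obtains $\wh Z$ as the c\`adl\`ag regularization along rational times; the supermartingale property, the bid-ask bounds and condition (i) then hold by construction (Proposition \ref{constructZhat}). Condition (ii) follows from the supermartingale inequality $\massE[\wh Z^0_T\wh g]\le\wh Z^0_0x$ in one direction and, in the other, from a concavity estimate of $u$ at $x$ combined with the difference-quotient bound $\wt Z^0_0x\le\massE[U'(\alpha\wh g+e_T)\wh g]$ and $\alpha\nearrow1$ --- essentially your Step 1 computation in the single direction $g=\alpha\wh g$, where the strict positivity of $e_T$ is what keeps $U'(\alpha\wh g+e_T)$ well defined. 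So your Step 1 is sound in spirit (modulo the integrability issues you wave at with RAE), but without a construction replacing Step 2 there is no candidate process for it to apply to.
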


\begin{theorem}
 The $\lambda$-supermartingale-CPS $\widehat{Z}\in\cZ^\lambda_{sup}$ satisfying Lemma \ref{condition} (i)-(ii) defines a shadow price 
   $S^{\widehat{Z}}:=\frac{\widehat{Z}^1}{\widehat{Z}^0}$.
\end{theorem}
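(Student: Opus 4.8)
The plan is to squeeze $u^{\widehat Z}(x;e_T)$ between $u(x;e_T)$ and itself, using Lemma~\ref{condition}~(i)--(ii) to turn the generic dual bound \eqref{onedir} into an exact identity. One inequality is already recorded before Lemma~\ref{condition}: from $\cC^\lambda(x)\subseteq\cC^{\widehat Z}(x)$ we have
\[
 u(x;e_T)\leq u^{\widehat Z}(x;e_T),
\]
so it remains to prove the reverse inequality. Since $\widehat g$ is the primal optimizer furnished by Theorem~\ref{MT1}, we have $u(x;e_T)=\massE[U(\widehat g+e_T)]$, and the task reduces to showing that the frictionless value $u^{\widehat Z}(x;e_T)$ does not exceed $\massE[U(\widehat g+e_T)]$.

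First I would apply the dual bound \eqref{onedir} with $Z=\widehat Z$ to every $\widetilde g\in\cC^{\widehat Z}(x)$, obtaining
\[
 u^{\widehat Z}(x;e_T)\leq \massE[V(\widehat Z^0_T)]+\massE[\widehat Z^0_Te_T]+\widehat Z^0_0 x.
\]
Then I would invoke condition~(i), $\widehat Z^0_T=U'(\widehat g+e_T)$. Because $\widehat g+e_T>0$ a.s.\ (otherwise $U(\widehat g+e_T)=-\infty$ on a set of positive measure, contradicting optimality of $\widehat g$ in a problem whose value lies above $\massE[U(x+e_T)]>-\infty$), the Fenchel--Young inequality holds with equality at this point:
\[
 V(\widehat Z^0_T)=U(\widehat g+e_T)-(\widehat g+e_T)\,\widehat Z^0_T \quad\text{a.s.}
\]
Taking expectations and splitting the product term yields
\[
 \massE[V(\widehat Z^0_T)]=\massE[U(\widehat g+e_T)]-\massE[\widehat Z^0_T\widehat g]-\massE[\widehat Z^0_Te_T].
\]

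Finally I would substitute this identity into the dual bound. The two occurrences of $\massE[\widehat Z^0_Te_T]$ cancel, and condition~(ii), $\massE[\widehat Z^0_T\widehat g]=\widehat Z^0_0 x$, cancels the term $-\massE[\widehat Z^0_T\widehat g]$ against $+\widehat Z^0_0 x$, leaving exactly
\[
 u^{\widehat Z}(x;e_T)\leq \massE[U(\widehat g+e_T)]=u(x;e_T).
\]
Combined with the easy inequality, this forces equality throughout, i.e.\ $\sup_{g\in\cC^\lambda(x)}\massE[U(g+e_T)]=\sup_{\widetilde g\in\cC^{\widehat Z}(x)}\massE[U(\widetilde g+e_T)]$, which is precisely the defining property making $S^{\widehat Z}=\widehat Z^1/\widehat Z^0$ a shadow price.

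The step I expect to require the most care is not the algebra but the integrability bookkeeping that legitimises splitting $\massE[(\widehat g+e_T)\widehat Z^0_T]$ and rearranging a finite-valued dual expression. Here I would lean on Lemma~\ref{suprep}: applied to an admissible strategy $(\varphi^0,\varphi^1)\in\cA^\lambda(x)$ supporting $\widehat g$, it shows $\widehat Z^0\varphi^0+\widehat Z^1\varphi^1$ is a positive supermartingale, whence $\massE[\widehat Z^0_T\widehat g]\leq\widehat Z^0_0 x<\infty$; together with finiteness of $\massE[V(\widehat Z^0_T)]$ and $\massE[\widehat Z^0_Te_T]$ (needed for \eqref{onedir} to be informative, the latter guaranteed by Assumption~\ref{re}), every expectation above is well defined and the pointwise Fenchel--Young equality integrates without obstruction.
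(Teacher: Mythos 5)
Your proposal is correct and follows essentially the same argument as the paper: both combine the easy inequality $u(x;e_T)\leq u^{\widehat Z}(x;e_T)$ from $\cC^\lambda(x)\subseteq\cC^{\widehat Z}(x)$ with the dual bound \eqref{onedir}, using Lemma~\ref{condition}~(i) to obtain the Fenchel--Young equality $U(\widehat g+e_T)=V(\widehat Z^0_T)+\widehat Z^0_T(\widehat g+e_T)$ and Lemma~\ref{condition}~(ii) to cancel $\massE[\widehat Z^0_T\widehat g]$ against $\widehat Z^0_0x$. The paper merely writes this as a single chain of (in)equalities rather than two separate bounds, and your added integrability remarks are sensible but not a departure in method.
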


\begin{proof}
 Consider the frictionless market associated with $S^{\widehat{Z}}$.
 By Lemma \ref{condition}, we have
  \begin{align*}
    u^{\widehat{Z}}(x;e_T)\geq u(x; e_T) &= \massE[U(\widehat{g}+e_T)]=\massE[V(\widehat{Z}^0_T)+\widehat{Z}^0_T(\widehat{g}+e_T)] \\
                                    &= \massE[V(\widehat{Z}^0_T)]+\massE[\widehat{Z}^0_Te_T]+\widehat{Z}^0_0x\geq u^{\widehat{Z}}(x;e_T),
  \end{align*}
  where the last inequality follows from \eqref{onedir}.
 The inequality above implies $u^{\widehat{Z}}(x;e_T)= u(x;e_T)$, which proves that $S^{\widehat{Z}}$ is a shadow price for the problem \eqref{pp}.
\end{proof}

\begin{remark}
 By the strict concavity of $U$, $\widehat{g}$ is the unique solution in $\mathcal{C}^{\widehat{Z}}(x)$ for the frictionless problem $u^{\widehat{Z}}(x;e_T)$. 
 Moreover, the trading strategy $\widehat{\varphi}$, that attains the maximum in the market with transaction costs, does the same in the frictionless one associated with the shadow price $S^{\widehat{Z}}$. 
 Therefore, the optimal trading strategy $(\widehat{\varphi}^0,\widehat{\varphi}^1)$ for $S$ under transaction costs $\lambda$ satisfies 
  \begin{align*}
    & \{d\widehat{\varphi}^1_t>0\}\subseteq \{S_t^{\widehat{Z}}=S_t\}, \\
    & \{d\widehat{\varphi}^1_t<0\}\subseteq \{S^{\widehat{Z}}_t=(1-\lambda)S_t\},
  \end{align*}
  for all $0\leq t\leq T$.
\end{remark}

\begin{remark}
  In our case, shadow prices are determined not only by the random endowment but also by its decomposition (see Assumption \ref{re}).  
  The decomposition of the random endowment together with the no-short-selling constraints can be explained as the agent's trading rule created by his/her controller. 
  Precisely, if the random endowment $\widetilde{e}_T$ that the agent will eventually receive is decomposed into $x+e_T$ by his/her controller, then it means that the agent is allowed to spend at most $x$ in the bond market for trading the stock. 
  Thus, the different ways of decomposition mean the different limits of short selling in bond, which lead to different maximal utilities and also shadow prices.
%  eventually receives is decomposed in to, which means that could borrow at most  knows t of them Here a remark about the decomposition of $e_T$}.
\end{remark}

\subsection{Proof of Lemma \ref{condition}}

In this subsection, we shall prove Lemma \ref{condition} by following \cite[page 814-816]{BCKMK13}. 
Thus, we only give the sketch in order to show how it develops in the num\'eraire-based context and how a positive random endowment works.  
% For the convenience of the reader, we give the outline in order to show how it develops in the num\'eraire-based context. 
The proof is divided in several stages. 

Firstly, similar to \cite[Lemma 4.4]{BCKMK13}, we have the following dynamic programming principle (see also \cite[Theorem 1.17]{Elk81}), which could be  proved in a direct way with the num\'eraire-based model. 
%
%{\color{cyan}Proposition 3.11: is this proved here ? Are [2,Lemma 4.4] and/or [11, Theorem 1.17] also
%used ? How, where ? This is unclear from the present formulation.} {\color{Magenta} Junjian: Yes, we prove this, [2,Lemma 4.4] and/or [11, Theorem 1.17] are not used.}
\begin{proposition}\label{prop311}
 Define 
   $$ \cU_s(\varphi_s^0, \varphi_s^1 ):= \esssup_{(\psi^0,\psi^1)\in\cA^\lambda_{s,T}(\varphi_s^0,\varphi_s^1)}\Econd{U(\psi^0_T+e_T)}{\cF_s}, $$
  where $\cA^\lambda_{s,T}(\varphi_s^0, \varphi_s^1)$ is the set of all admissible $\lambda$-self-financing trading strategies, 
  which agree with $\varphi\in\cA^{\lambda}(x)$ in $[0,s]$.
  
 Then, the process $\big(\cU_s(\wh \varphi_s^0,\wh \varphi_s^1)\big)_{0\leq s\leq T}$ is a martingale i.e., 
   $$ \cU_s(\wh \varphi_s^0,\wh \varphi_s^1 )= \Econd{\cU_t(\wh \varphi_t^0,\wh \varphi_t^1 )}{ \cF_s},  \quad \mbox{a.s.,} $$
  for all optimal trading strategies $\wh\varphi$ attaining $\wh{g}$.  
\end{proposition}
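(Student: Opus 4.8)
The plan is to realize the family $\big\{\cU_s(\varphi^0_s,\varphi^1_s)\big\}$ as the value process of an abstract control problem fitting the hypotheses of El~Karoui's dynamic programming theorem \cite[Theorem~1.17]{Elk81}, invoke that theorem to obtain the supermartingale property along \emph{every} admissible strategy together with the dynamic programming identity, and then upgrade to the martingale property along the optimizer using optimality of $\wh\varphi$. Concretely, El~Karoui's framework applies once one checks that the set of admissible continuations is stable under concatenation (pasting) at intermediate times and that the associated family of conditional rewards is upward directed with adequate integrability; it then yields that $s\mapsto\cU_s(\varphi^0_s,\varphi^1_s)$ is a supermartingale for every admissible $\varphi$ and satisfies
\[
 \cU_s(\varphi^0_s,\varphi^1_s)=\esssup_{\psi\in\cA^\lambda_{s,T}(\varphi^0_s,\varphi^1_s)}\Econd{\cU_t(\psi^0_t,\psi^1_t)}{\cF_s},\qquad 0\le s\le t\le T.
\]

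First I would verify the pasting property. Given $0\le s\le t\le T$, a strategy $\varphi$ and an admissible continuation $\psi\in\cA^\lambda_{t,T}(\varphi^0_t,\varphi^1_t)$, define the concatenation that equals $\varphi$ on $[0,t]$ and $\psi$ on $(t,T]$. Because the no-short-selling constraints $\varphi^0\ge0,\ \varphi^1\ge0$ are pointwise and the $\la$-self-financing condition is expressed through the Jordan--Hahn increments on each subinterval, the concatenation is again predictable, of finite variation, $\la$-self-financing and admissible; matching the endpoint $(\varphi^0_t,\varphi^1_t)$ is precisely what prevents a spurious jump, hence a spurious transaction charge, at the junction $t$. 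This closure under pasting is the structural input the abstract theorem requires.

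Next I would check upward directedness of $\big\{\Econd{U(\psi^0_T+e_T)}{\cF_s}:\psi\in\cA^\lambda_{s,T}(\varphi^0_s,\varphi^1_s)\big\}$. Given two continuations $\psi,\chi$, the event $A:=\big\{\Econd{U(\psi^0_T+e_T)}{\cF_s}\ge\Econd{U(\chi^0_T+e_T)}{\cF_s}\big\}$ lies in $\cF_s$, and the strategy following $\psi$ on $A$ and $\chi$ on $A^c$ after time $s$ is admissible by the pasting property, with $\cF_s$-conditional reward $\ind_A\Econd{U(\psi^0_T+e_T)}{\cF_s}+\ind_{A^c}\Econd{U(\chi^0_T+e_T)}{\cF_s}$ dominating both. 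Hence the family is directed upward, so the essential supremum is an increasing limit of a sequence and commutes with $\Econd{\cdot}{\cF_s}$. The needed integrability is furnished by \eqref{onedir}, which bounds $\massE[U(\psi^0_T+e_T)]$ from above by $\massE[V(Z^0_T)]+\massE[Z^0_Te_T]+Z^0_0x<\infty$ for a fixed $Z\in\cZ^\lambda_{sup}$, so that $\massE[\cU_s]\le u(x;e_T)<\infty$ and $\cU_s$ is a.s.\ finite; the contribution from below is controlled by restricting, without loss, to continuations of finite expected utility.

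Finally I would deduce the martingale property along $\wh\varphi$. Choosing $\psi=\wh\varphi$ in the identity above gives the supermartingale inequality $\cU_s(\wh\varphi^0_s,\wh\varphi^1_s)\ge\Econd{\cU_t(\wh\varphi^0_t,\wh\varphi^1_t)}{\cF_s}$. Since $\cF_0$ is trivial, $\cU_0(x,0)=u(x;e_T)$, whereas $\cU_T(\wh\varphi^0_T,\wh\varphi^1_T)=U(\wh\varphi^0_T+e_T)=U(\wh g+e_T)$ and, by optimality of $\wh\varphi$, $\massE[U(\wh g+e_T)]=u(x;e_T)$. Thus the supermartingale $s\mapsto\cU_s(\wh\varphi^0_s,\wh\varphi^1_s)$ satisfies $\massE[\cU_0]=\massE[\cU_T]$, which forces $\massE[\cU_s]$ to be constant in $s$ and hence the supermartingale to be a martingale. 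I expect the main obstacle to be the pasting step: one must check with genuine care that concatenating two admissible $\la$-self-financing strategies at an intermediate time neither violates the no-short-selling constraints nor creates an extra transaction at the junction, so that the hypotheses of \cite{Elk81} are truly met; the integrability bookkeeping for the possibly unbounded-below $U$ is the secondary technical point.
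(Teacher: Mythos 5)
Your proposal is correct, but it takes a genuinely different route from the paper's proof. The paper does not go through El~Karoui's abstract theorem (it only remarks beforehand that the proposition could be obtained as a special case of \cite[Theorem 1.17]{Elk81}); instead it argues directly that the essential supremum defining $\cU_s(\wh\varphi^0_s,\wh\varphi^1_s)$ is attained by the continuation of $\wh\varphi$ itself: if some $(\psi^0,\psi^1)\in\cA^\lambda_{s,T}(\wh\varphi^0_s,\wh\varphi^1_s)$ did conditionally strictly better on an $\cF_s$-measurable set $A$ of positive measure, the pasted strategy $(\eta^0,\eta^1)=(\psi^0,\psi^1){\bf 1}_A+(\wh\varphi^0,\wh\varphi^1){\bf 1}_{A^c}$ would be admissible and satisfy $\massE[U(\eta^0_T+e_T)]>u(x;e_T)$, contradicting the optimality of $\wh\varphi$; this yields $\cU_s(\wh\varphi^0_s,\wh\varphi^1_s)=\Econd{U(\wh g+e_T)}{\cF_s}$, and the martingale property is then immediate from the tower property. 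You instead actually carry out the reduction to the abstract dynamic programming principle, obtaining the supermartingale property of $s\mapsto\cU_s(\wh\varphi^0_s,\wh\varphi^1_s)$, and upgrade it to a martingale via the endpoint identity $\massE[\cU_0]=u(x;e_T)=\massE[U(\wh g+e_T)]=\massE[\cU_T]$. The two arguments share the same engine --- your upward-directedness step is precisely the paper's pasting-on-$A$ construction --- but they package it differently. The paper's route is more economical: it needs no abstract DPP identity and no integrability bookkeeping for $\cU_s$, since the martingale is exhibited explicitly as the closed martingale $\Econd{U(\wh g+e_T)}{\cF_s}$, and this explicit conditional optimality of $\wh\varphi$ is what feeds into the construction of $\wh Z$ in Proposition \ref{constructZhat}. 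Your route costs the verification of El~Karoui's hypotheses and the finiteness of $\massE[\cU_s]$ (both of which you correctly flag as the delicate points), but it buys more information: the supermartingale property along \emph{every} admissible strategy, with optimality of $\wh\varphi$ characterized as saturation of that inequality.
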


\begin{proof}
 Without loss of generality, it suffices to verify the following claim
  \begin{align}  \label{contra}
    \Econd{U(\psi^0_T +e_T )}{\cF_s} \leq  \Econd{U(\widehat{\varphi}^0_T +e_T )}{\cF_s},
  \end{align}
  for all $(\psi^0,\psi^1)\in\cA_{s,T}^{\lambda}(\wh \varphi_s^0,\wh \varphi_s^1)$. 
  
 To obtain a contradiction, we suppose that \eqref{contra} is not true, i.e., there exists a $(\psi^0,\psi^1)\in\cA_{s,T}^{\lambda}(\wh\varphi_s^0,\wh\varphi_s^1)$ and a set $A\subseteq\Omega$ with $\massP(A)>0$ defined as
  \begin{align}\label{setA}
    A:=\left\{\massE[U(\psi^0_T +e_T )|\cF_s] > \massE[U(\wh{\varphi}^0_T +e_T)|\cF_s]\right\} \in \cF_s.
  \end{align}
 Then define $$(\psi^0,\psi^1){\bf 1}_A +(\wh\varphi^0, \wh\varphi^1){\bf 1}_{A^c} =: (\eta^0,\eta^1)\in\cA_{s,T}^{\lambda}(\wh\varphi_s^0,\wh \varphi_s^1).$$
 We have 
   $$ \Es{U(\eta^0_T +e_T )} >  \Es{U(\wh\varphi^0_T +e_T )}=u(x;e_T), $$
  which is in contradiction to the maximality of $\widehat{\varphi}$. 
 Thus, by the definition of $\cU_s(\varphi_s^0,\varphi_s^1)$ and \eqref{contra}, we obatin
  \begin{align*}
    \cU_s(\widehat{\varphi}^0_s,\widehat{\varphi}^1_s)&=\esssup_{(\psi^0,\psi^1)\in\cA^\lambda_{s,T}(\widehat{\varphi}^0_s,\widehat{\varphi}^1_s)}\Econd{U(\psi^0_T +e_T )}{\cF_s}
       = \Econd{U(\widehat{\varphi}^0_T +e_T )}{\cF_s}\\
       &=\Econd{  \Econd{U(\widehat{\varphi}^0_T +e_T )}{\cF_t}}{\cF_s}= \Econd{\cU_t(\wh \varphi_t^0,\wh \varphi_t^1 )}{ \cF_s}.
  \end{align*} 
 Finally, the tower property of conditional expectations yields the desired result.
\end{proof}

The next step goes in an exactly same way as in \cite{BCKMK13}, i.e., we should first construct a pair $\wh Z=(\wh Z_t^0,\wh Z_t^1)_{\tT}$, then verify a shadow price can be defined by 
  $\frac{\wh Z^1}{\wh Z^0}$.
The additional positive $e_T$ in the dynamic will not alter the following results. 

\begin{proposition} \label{constructZhat}
 The following processes are well defined: 
  \begin{equation} \label{DefZtilde}
    \left\{
      \begin{array}{l} 
        \displaystyle\wt Z_t^0:= \lim_{\varepsilon\searrow 0}\frac{1}{\varepsilon}\big(\cU_t( \wh \varphi_t^0+ \varepsilon,\wh \varphi_t^1 )- \cU_t( \wh \varphi_t^0,\wh \varphi_t^1 )\big), \\
        \displaystyle\wt Z_t^1:= \lim_{\varepsilon\searrow 0}\frac{1}{\varepsilon}\big(\cU_t( \wh \varphi_t^0,\wh \varphi_t^1+ \varepsilon )- \cU_t( \wh \varphi_t^0,\wh \varphi_t^1 )\big),
      \end{array}
    \right. 
  \end{equation}
  for $0 \leq t <T$, and 
  \begin{equation} 
    \left\{\displaystyle
      \begin{array}{l}
        \wt Z_T^0 := U'(\wh \varphi_T +e_T ), \\
        \wt Z_T^1 := U'(\wh \varphi_T +e_T )(1-\la)S_T.
      \end{array}
    \right.
  \end{equation}
 Furthermore, define 
  \begin{equation} \label{DefZhat}
    \left\{
       \begin{array}{lc} 
         \displaystyle\wh Z_t^i := \lim_{\stackrel{s \searrow t}{s\in \QQ}}  \wt Z_s^i, \quad & 0 \leq t <T; \\ 
         \displaystyle\wh Z_t^i := \wt Z_T^i, \quad & t=T.
       \end{array} 
    \right. 
  \end{equation}
 Then, the process $\widehat{Z}$ is a c\`adl\`ag supermartingale and moreover, for all $0\leq t\leq T$, we have
   \begin{equation} \label{ZhatinBidAsk}
    (1-\la)S_t \leq \frac{\wh Z_t^1}{\wh Z_t^0} \leq S_t, \quad a.s. 
   \end{equation}
 Consequently, $\wh Z$ is a $\lambda$-supermartingale-CPS.
\end{proposition}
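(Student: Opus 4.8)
The plan is to exploit the concavity of the dynamic value function $\cU_t(\cdot,\cdot)$ in the holdings $(a,b)$ of bond and stock: the two processes in \eqref{DefZtilde} are precisely the right-hand partial derivatives of $\cU_t$ with respect to the bond and the stock position, evaluated along the optimal holdings $(\wh\varphi^0_t,\wh\varphi^1_t)$. I would organize the argument in four steps: (1) concavity of $\cU_t$ and well-definedness (finiteness and strict positivity) of the limits $\wt Z^0_t,\wt Z^1_t$; (2) the supermartingale property of $\wt Z^0$ and $\wt Z^1$; (3) the bid--ask bounds $(1-\la)S_t\wt Z^0_t\le\wt Z^1_t\le S_t\wt Z^0_t$; and (4) passage to the c\`adl\`ag regularization $\wh Z$ in \eqref{DefZhat}, verifying that it inherits all these properties and hence is a $\la$-supermartingale-CPS in the sense of Definition \ref{CPS}.

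For Step (1), since the continuation sets $\cA^\la_{t,T}(a,b)$ are convex and a convex combination of continuations from $(a_1,b_1)$ and $(a_2,b_2)$ is again a continuation from the convex combination of the starting points, concavity of $U$ gives that $(a,b)\mapsto\cU_t(a,b)$ is a.s.\ concave. Consequently each difference quotient in \eqref{DefZtilde} is monotone in $\varepsilon$, so the limits exist in $[0,\infty]$; strict monotonicity of $U$ forces them to be strictly positive. The delicate point --- which I expect to be the main obstacle --- is finiteness: the value function is finite thanks to the superreplication bound of Lemma \ref{suprep}, but a finite concave function may still have an infinite one-sided derivative at a boundary point, and the optimal bond holding $\wh\varphi^0_t$ may well vanish. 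This is where Assumptions \ref{re} and \ref{U(x)assumption} enter: the strict positivity $x>0$ together with $e_T\ge0$ and the Inada condition guarantee that the optimal terminal wealth $\wh\varphi_T+e_T$ is a.s.\ strictly positive, so that $\wt Z^0_T=U'(\wh\varphi_T+e_T)$ is a.s.\ finite, and this finiteness is propagated to $t<T$ by comparison with the terminal marginal utility along the lines of \cite[Proposition 4.2]{BCKMK13}.

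Step (2) is where the dynamic programming principle enters. Fix $s<t$. Holding the extra $\varepsilon$ units of bond untouched while following the optimal strategy $\wh\varphi$ from $s$ to $t$ is an admissible continuation from $(\wh\varphi^0_s+\varepsilon,\wh\varphi^1_s)$ reaching $(\wh\varphi^0_t+\varepsilon,\wh\varphi^1_t)$, whence $\cU_s(\wh\varphi^0_s+\varepsilon,\wh\varphi^1_s)\ge\Econd{\cU_t(\wh\varphi^0_t+\varepsilon,\wh\varphi^1_t)}{\cF_s}$. Subtracting the martingale identity $\cU_s(\wh\varphi^0_s,\wh\varphi^1_s)=\Econd{\cU_t(\wh\varphi^0_t,\wh\varphi^1_t)}{\cF_s}$ supplied by the dynamic programming principle, dividing by $\varepsilon$ and letting $\varepsilon\searrow0$ --- the interchange with the conditional expectation being justified by monotone convergence, as the quotients decrease to $\wt Z^0$ --- yields $\wt Z^0_s\ge\Econd{\wt Z^0_t}{\cF_s}$. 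The same argument with an extra $\varepsilon$ units of stock (which keeps the no-short-selling constraint intact) gives $\wt Z^1_s\ge\Econd{\wt Z^1_t}{\cF_s}$. Taking $t=T$ and using $\cU_T(a,b)=U(a+(1-\la)S_Tb+e_T)$ shows the supermartingale inequality holds up to and including the terminal time with the prescribed $\wt Z_T$.

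For Step (3), an instantaneous round-trip trade is admissible from any holding, so buying $\varepsilon$ shares gives $\cU_t(a,b+\varepsilon)\le\cU_t(a+S_t\varepsilon,b)$ and selling $\varepsilon$ shares gives $\cU_t(a+(1-\la)S_t\varepsilon,b)\le\cU_t(a,b+\varepsilon)$; dividing by $\varepsilon$, letting $\varepsilon\searrow0$ and using the definitions \eqref{DefZtilde} gives $(1-\la)S_t\wt Z^0_t\le\wt Z^1_t\le S_t\wt Z^0_t$. Finally, in Step (4) the right-limits along the rationals in \eqref{DefZhat} exist almost surely because $\wt Z^0,\wt Z^1$ are supermartingales (Doob's upcrossing inequality), and under the usual conditions they define genuine c\`adl\`ag supermartingales $\wh Z^0,\wh Z^1$ satisfying the supermartingale inequality up to $T$ with $\wh Z_T=\wt Z_T$. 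Right-continuity of $S$ transfers the bid--ask bounds to $\wh Z$, giving \eqref{ZhatinBidAsk}; positivity of $\wh Z^0$ follows since a nonnegative supermartingale absorbed at $0$ would force $\wh Z^0_T=U'(\wh\varphi_T+e_T)=0$, contradicting Step (1), and then $\wh Z^1>0$ by \eqref{ZhatinBidAsk}. Together these verify all requirements of Definition \ref{CPS}, so $\wh Z\in\cZ^\la_{sup}$.
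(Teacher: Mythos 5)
Your overall route coincides with the paper's: concavity of $(a,b)\mapsto\cU_t(a,b)$ makes the difference quotients in \eqref{DefZtilde} monotone so the limits exist, the dynamic programming principle (carrying the extra $\varepsilon$ units of bond or stock untouched alongside $\wh\varphi$) yields the supermartingale inequality, instantaneous round-trip trades give the bid--ask bounds, and \eqref{DefZhat} is handled by the standard right-limit regularization theorem for supermartingales. However, your finiteness argument in Step (1) contains a genuine gap. You correctly isolate the danger ($\wh\varphi^0_t$ may sit on the boundary, where a finite concave function can have an infinite one-sided derivative), but your cure --- that a.s.\ finiteness of $\wt Z^0_T=U'(\wh\varphi_T+e_T)$ ``is propagated to $t<T$'' by comparison with the terminal marginal utility --- runs in the wrong direction. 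The only comparison available before a CPS exists is the supermartingale inequality $\wt Z^0_s\ge\Econd{\wt Z^0_t}{\cF_s}$ for $s\le t$, which bounds \emph{later} values by \emph{earlier} ones; it imposes no upper bound on $\wt Z^0_t$ in terms of $\wt Z^0_T$. (The $[0,\infty]$-valued process equal to $+\infty$ for $t<T$ and to $1$ at $T$ satisfies the inequality, so terminal finiteness alone proves nothing.) Any genuine backward estimate of $\wt Z^0_t$ by conditional expectations of terminal marginal utility is a duality statement requiring a deflator, i.e.\ precisely the object being constructed.

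The paper closes this at time $0$, not at time $T$: $\wt Z^0_0$ equals the right derivative at $x$ of the concave, finitely valued function $u(\,\cdot\,;e_T)$ on $\RR_+$, and $x>0$ is an \emph{interior} point, so $\wt Z^0_0<\infty$; the extended-valued supermartingale inequality then transports finiteness \emph{forward}, $\massE[\wt Z^0_t]\le\wt Z^0_0<\infty$, giving integrability and a.s.\ finiteness for every $t$, with $\wt Z^1$ controlled via the bid--ask bound $\wt Z^1_t\le S_t\wt Z^0_t$ (or by the same argument started from $\wt Z^1_0$). Thus the decisive inputs are the finiteness of $u$ and the interiority of the initial cash position, rather than the Inada condition at $T$; only after this does the regularization step you describe become legitimate, since it needs $\wt Z$ to be a true (integrable) supermartingale. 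A further minor slip: for a concave function the difference quotients are nonincreasing in the increment, so in your Step (2) they \emph{increase} to $\wt Z^0_t$ as $\varepsilon\searrow0$, and the interchange with conditional expectation is the upward monotone convergence theorem (a decreasing family, as you state it, would instead require an integrable majorant). With these repairs, your remaining steps --- bid--ask bounds by round trips, transfer of the bounds to $\wh Z$ by right-continuity of $S$, and strict positivity via absorption at $0$ against $\wh Z^0_T=U'(\wh\varphi_T+e_T)>0$ --- are sound and agree with the paper.
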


\begin{proof}[Sketch of the proof]
 Firstly, $\wt Z$ is well-defined, since the right-hand side of $\eqref{DefZtilde}$ is monotone due to the concavity of $U$  and the definition of  $\cU$. 
 Furthermore, the set 
 $$\left\{\Econd{U(\psi_T^0+e_T)}{\cF_t}\,\big|\,(\psi^0,\psi^1)\in\cA_{t,T}^{\lambda}(\wh\varphi_t+\varepsilon e_i)\right\}$$ is directed upwards, 
   thus there exists a sequence of $$(\psi^n)_{n\in\NN}=(\psi^{n,0},\psi^{n,1})_{n\in\NN} \subseteq \cA_{t,T}^{\lambda}(\wh \varphi_t+ \varepsilon e_i)$$ such that 
  \begin{align*}
    \cU_t(\wh \varphi_t+ \varepsilon e_i) &= \nearrow-\lim_{n \to \i}\Econd{U\big(\psi^{n,0}_T +e_T\big)}{\cF_t}.
  \end{align*}
 Then, by proceeding a similar procedure as in the proof of \cite[Proposition 4.5]{BCKMK13}, we can verify the supermartingale property of $\wt Z$, 
   which is not necessarily c\`adl\`ag. 
 Recalling that $u(x;e_T)$ is finitely valued and concave on $\RR_+$, we have that 
   $$ \wt y:=\widetilde{Z}^0_0=\lim_{\varepsilon\searrow 0}\frac{u(x+\varepsilon;e_T)-u(x;e_T)}{\varepsilon} $$
   takes finite values for $x>0$. 
 Consequently, by \cite[Proposition 1.3.14(iii)]{KS88}, \eqref{DefZhat} is a well-defined c\`adl\`ag supermartingale. 
 In particular, $\widehat{Z}_0^0\leq \widetilde{Z}^0_0$. 
 
 On the other hand, the proof of \eqref{ZhatinBidAsk} goes in the same way as $(57)$ in \cite{Loe00}. We complete the proof. 
\end{proof}

\begin{proof}[Proof of Lemma \ref{condition}]
 It remains to proof (2) in Lemma \ref{condition}. 
 Since $\big(\wh Z^0,\wh Z^1\big)\in\cZ^{\lambda}_{sup}$, we have
  \begin{equation} \label{xygeqXY}
    \massE\big[\wh Z^0_T\wh g\big] \leq \wh Z^0_0x. 
  \end{equation}
 
 It remains to show the reversed inequality. For $\alpha<1$, we note that $u(\alpha x;e_T)\geq \massE[U(\alpha\wh g+e_T)]$. 
 By the concavity of $u$, we obtain 
   $$ \wt Z^0_0 (x-\alpha x) \leq u(x;e_T)-u(\alpha x;e_T)\leq \massE[U(\wh g+e_T)]-\massE[U(\alpha \wh g+ e_T)]. $$ 
 Therefore, it follows from the strict concavity and the continuous differentiability of $U$ that
 \begin{equation}\label{explain} \wt Z_0^0x \leq \Es{\frac{U(\wh g+e_T) - U(\alpha\wh g +e_T)}{1-\alpha}} \leq \Es{ U'(\alpha \wh g +e_T)\wh g}.  \end{equation}
 Letting $\alpha\nearrow 1$, monotone convergence yields
  \begin{equation} \label{xyleqXY}
    \wh Z_0^0x \leq\wt Z_0^0x \leq \Es{U'(\wh g+e_T)\wh g}\leq \massE\big[\wh Z^0_T\wh g\big].    
  \end{equation}
 We complete the proof by comparing \eqref{xygeqXY} and \eqref{xyleqXY}. 
% Recalling \cite[Proposition 1.3.14(ii)]{KS88}, we have $\widetilde{Z}^0_0\geq\widehat{Z}_0^0$ and hence 
% \wh Z^0_0x \leq \massE\big[\wh Z^0_T\wh g\big].     
\end{proof}

% \begin{remark}
%  As $u$ is convex, $\widehat{y}$ is well-defined as right derivative of $u$ at $x$. 
%  Actually, we may show that $u$ is differentiable by showing that the dual value function $v$ is continously differentiable as in \cite{CSW01}. 
% \end{remark}

\begin{remark}
 We have seen here that the nonnegativity of the random endowment is important. This ensures the strict positivity of $\alpha \wh g + e_T $, for $0< \alpha <1$, such that  $\Es{ U'(\alpha \wh g +e_T)\wh g}$ in \eqref{explain} is well defined.
%  because from $e_T\geq 0$ we have $\la \wh g + e_T >0$ a.s.~for $\la <1$, sufficiently closed to $1$. 
\end{remark}
\begin{remark}
We  study in this paper the num\'eraire-based two-asset model just for the sake of simplicity, however, the argument could adapt to the multi-currency setting as in \cite{BCKMK13} with no essential changes. Notice that both the present paper and \cite{BCKMK13} concern a univariate utility function.
\end{remark}

%Summing up, we construct in this section a $(\wh Z^0,\wh Z^1)\in\cZ^\lambda_{sup}$, such that it suffices the conditions $(i)$ and $(ii)$ in Lemma \ref{condition}. 

\section{Example: when the random endowment becomes negative}

We have discussed the existence of shadow price processes for the utility maximization problem with positive random endowment under the no-short-selling constraint of each asset. Actually, the positivity constraint on the random endowment is a sufficient condition. However, when the random endowment becomes negative, the existence of shadow prices is not clear if we keep our setting. This will be an interesting problem for our future research. We need to mention here that with a different definition of admissibility for portfolios, the existence of sandwiched shadow price is proved in \cite{BY15}, in which the random endowment could be negative. 

In this section, we provide  a nontrivial and heuristic example showing that even when the random endowment becomes negative, there may exist a shadow price for the utility maximization problem in the market under transaction costs. 
In particular, we discuss maximal trading strategies in the context with transaction costs. 
Notice that the random endowment constructed in our example is associated with such a maximal trading strategy.

%{\color{blue}(add a general discussion (even heuristic) on whether or not it is possible to extend the general results of the previous section to a possibly negative random endowment. Is there any example showing that in general, even with short-selling constraints, there is no hope to have existence of shadow prices if the random endowment can go negative? )}
%In this section, we provide  a nontrivial example showing that even when the random endowment becomes negative, there may exist a shadow price for the utility maximization problem in the market under transaction costs. 

%In particular, we discuss maximal trading strategies in the context with transaction costs. 
%Notice that the random endowment constructed in our example is associated with such a maximal trading strategy.
%construct the random \textcolor{red}{Here short description of this example}.

We consider the Black-Scholes model with finite time horizon. Assume that the market consists of a saving account with zero interest rate and a stock with price dynamics
  \begin{align*}
    S_t = \left\{\begin{array}{ll}
          \exp\left(B_t + \frac{t}{2}\right), & 0 \leq t \leq T/2;\\
           S_{T/2}, & T/2< t\leq T,
         \end{array} \right.
  \end{align*}
% $$ S_t := \exp\left(B_{t\wedge \frac{T}{2}} + \frac{t\wedge \frac{T}{2}}{2}\right), \quad 0 \leq t \leq T, $$
 where $(B_t)_{t\geq0}$ is a Brownian motion on a filtered probability space $(\Omega, \cF, (\cF_t)_{t\geq 0}, \massP)$. 
 %It is clearly that $S$ satisfies the condition of existence of an equivalent martingale measure $(EMM)$.
% and so that for each proportional 

Now fix a level of transaction costs $\lambda\in(0,1)$. 
We assume that an agent is endowed with the initial capital $x=1$ at time $0$ and the random endowment $e_T:=-\zeta(1-\lambda)S_{T/2}$ at terminal time $T$, 
  where $\zeta$ is an $\mathcal{F}_{T}$-measurable $(0,1)$-valued uniformly distributed random variable which is independent of $\mathcal{F}_{T/2}$. 
Then, we have the following proposition.
  \begin{proposition}\label{primopt}
  The buy-hold-sell strategy 
  \begin{equation}\label{strategy}
    \big(\widehat{\varphi}^0_t,\widehat{\varphi}^1_t\big):=
     \left\{
       \begin{array}{ll}
         (1,0), & t=0;   \\
         (0,1), & 0<t<T/2;  \\
         \big((1-\lambda)S_{T/2},0\big), & T/2\leq t\leq T,
       \end{array}
     \right. 
  \end{equation}
  solves the utility maximization problem (\ref{pp}).
  \end{proposition}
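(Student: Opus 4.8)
The plan is to prove optimality of \eqref{strategy} by showing that \emph{every} admissible competitor is dominated by it \emph{pathwise} in terminal wealth, after which optimality follows at once from the monotonicity of $U$. Write $\widehat g:=\widehat\varphi^0_T=(1-\lambda)S_{T/2}$ for the terminal bond position of the buy-hold-sell strategy; its argument of $U$ is $\widehat g+e_T=(1-\beta)(1-\lambda)S_{T/2}$, which is strictly positive a.s.\ since $\beta\in(0,1)$, so $\massE[U(\widehat g+e_T)]$ is well defined, and I would first record that it is finite by conditioning on $\cF_{T/2}$ and using the independence and uniform law of $\beta$. That \eqref{strategy} lies in $\cA^\lambda(1)$ is routine: the only transactions are the purchase of one share at the ask $S_0=1$ at time $0$ and its liquidation at the bid $(1-\lambda)S_{T/2}$ at time $T/2$, both respecting $\widehat\varphi^0,\widehat\varphi^1\geq0$ and $\widehat\varphi^1_T=0$.

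First I would exploit that the stock is frozen on $[T/2,T]$. Setting $\bar S:=S_{T/2}$ and, for any $(\varphi^0,\varphi^1)\in\cA^\lambda(1)$, the liquidation value $L_t:=\varphi^0_t+(1-\lambda)\bar S\,\varphi^1_t$, the self-financing equations on $[T/2,T]$ give $dL_t=-\lambda\bar S\,d\varphi^{1,\uparrow}_t\leq0$, so $L$ is non-increasing there. Since $\varphi^1_T=0$, this yields the pointwise bound
\[
  \varphi^0_T=L_T\leq L_{T/2}=\varphi^0_{T/2}+(1-\lambda)\bar S\,\varphi^1_{T/2},
\]
whose right-hand side is $\cF_{T/2}$-measurable. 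In words, no trading after $T/2$ can raise the terminal cash above the time-$T/2$ liquidation value; this neutralizes the fact that $\beta$, being $\cF_T=\cF_{T-}$-measurable, could in principle be used by a predictable strategy at the terminal instant.

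Next I would construct a CPS making the bid value $(1-\lambda)\bar S$ exactly attainable from unit capital. Pick a deterministic $C^1$ function $f:[0,T/2]\to[1-\lambda,1]$ with $f(0)=1$ and $f(T/2)=1-\lambda$ (e.g.\ $f(t)=1-2\lambda t/T$), extend it by $f\equiv1-\lambda$ on $[T/2,T]$, and set $S^{\widehat Z}_t:=f(t)S_t$. Since $f(t)\in[1-\lambda,1]$ we have $S^{\widehat Z}_t\in[(1-\lambda)S_t,S_t]$, and a Girsanov change of measure removing the deterministic drift of $f(t)S_t$ on $[0,T/2]$ produces an equivalent $\massQ$ under which $S^{\widehat Z}$ is a true martingale with $S^{\widehat Z}_0=1=S_0$ and $S^{\widehat Z}_{T/2}=(1-\lambda)\bar S$. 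Taking $\widehat Z^0$ to be the density process of $\massQ$ and $\widehat Z^1:=S^{\widehat Z}\widehat Z^0$, both are positive $\massP$-martingales, so $\widehat Z\in\cZ^\lambda_{sup}$, and
\[
  \massE\big[\widehat Z^0_{T/2}(1-\lambda)\bar S\big]=\massE\big[\widehat Z^1_{T/2}\big]=\widehat Z^1_0=S^{\widehat Z}_0\,\widehat Z^0_0=\widehat Z^0_0.
\]

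Finally I would combine the two ingredients. For any competitor with $\massE[U(\varphi^0_T+e_T)]>-\infty$, strict positivity of the argument forces $\varphi^0_T\geq\beta(1-\lambda)\bar S$ a.s.; letting $\beta\nearrow1$ together with $\varphi^0_T\leq L_{T/2}$ gives $L_{T/2}\geq(1-\lambda)\bar S$ a.s. On the other hand, Lemma \ref{suprep} at time $T/2$ yields $\massE[\widehat Z^0_{T/2}L_{T/2}]=\massE[\widehat Z^0_{T/2}\varphi^0_{T/2}+\widehat Z^1_{T/2}\varphi^1_{T/2}]\leq\widehat Z^0_0$; comparing with the displayed identity forces $\massE[\widehat Z^0_{T/2}(L_{T/2}-(1-\lambda)\bar S)]\leq0$, and as the integrand is nonnegative and $\widehat Z^0_{T/2}>0$ we get $L_{T/2}=(1-\lambda)\bar S$ a.s. Hence $\varphi^0_T\leq(1-\lambda)\bar S=\widehat g$, so $\varphi^0_T+e_T\leq\widehat g+e_T$ a.s., and monotonicity of $U$ gives $\massE[U(\varphi^0_T+e_T)]\leq\massE[U(\widehat g+e_T)]$; competitors with value $-\infty$ satisfy this trivially. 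I expect the main obstacle to be precisely the CPS construction of the third paragraph: one must exhibit a consistent price system equal to the ask at $t=0$ and to the bid at $t=T/2$, staying inside the bid-ask band and being a genuine martingale, so that $(1-\lambda)S_{T/2}$ is priced at exactly the unit initial capital. The freezing of $S$ after $T/2$ and the purely deterministic, monotone interpolation $f$ make this work for every $\lambda\in(0,1)$, while the measurability subtlety from $\cF_{T-}=\cF_T$ is absorbed by the liquidation-value monotonicity of the first step.
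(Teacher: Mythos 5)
Your proposal is correct, but it takes a genuinely different route from the paper. The paper deduces Proposition \ref{primopt} from Proposition \ref{maxima} --- the maximality of $(1-\lambda)S_{T/2}$ in $\cC^\lambda(1)$ --- whose proof is a lengthy case analysis relying on the conditional full support property of $S$ and on Delbaen--Schachermayer-type maximality in the frictionless markets for $S$ and $(1-\lambda)S$; given maximality, any competitor differing from $\widehat\varphi^0_T$ must fall below it with positive probability, and the independent uniform $\beta$ then forces its expected utility down to $-\infty$. You instead produce a dual certificate: the liquidation-value monotonicity on $[T/2,T]$ reduces everything to the $\cF_{T/2}$-measurable bound $L_{T/2}$ (which also cleanly handles the measurability point that the paper glosses over when it asserts $\{\phi^0_T<(1-\tfrac{1}{k})(1-\lambda)S_{T/2}\}\in\cF_{T/2}$); the explicit CPS $\widehat Z$ obtained from the deterministic interpolation $f$ plus Girsanov prices $(1-\lambda)S_{T/2}$ at exactly the unit initial capital; and Lemma \ref{suprep} then pins $L_{T/2}=(1-\lambda)S_{T/2}$ a.s.\ for every competitor with finite expected utility, giving pathwise domination and hence optimality by monotonicity of $U$. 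What your route buys: it is self-contained, avoids the corridor/full-support construction entirely, and proves exactly the amount of maximality needed within the constrained class $\cA^\lambda(1)$; your $S^{\widehat Z}_t=f(t)S_t$ is in fact a variant of the shadow price $\widetilde S$ of the paper's subsequent corollary (linear rather than geometric interpolation of the bid-ask band), so your argument anticipates that result. What the paper's route buys: Proposition \ref{maxima} is established in the larger admissibility class of \cite{CS15duality} \emph{without} the no-short-selling constraint, which the paper needs for its later remark and which your argument cannot reach, since Lemma \ref{suprep} requires positivity of both $\varphi^0$ and $\varphi^1$. Two caveats, the first shared with the paper: the claim that $\massE[U(\widehat g+e_T)]>-\infty$ does not follow from strict positivity of $\widehat g+e_T$ alone (it holds for $U=\log$ because $\int_0^1\log(1-u)\,du>-\infty$, but can fail for utilities sufficiently unbounded below near zero), so this integrability deserves an explicit check; and your phrase ``letting $\beta\nearrow1$'' should be implemented as the conditioning argument you allude to, namely that $\massP\big[\beta(1-\lambda)S_{T/2}<L_{T/2}\,\big|\,\cF_{T/2}\big]=1$ a.s.\ together with the uniform law and independence of $\beta$ forces $L_{T/2}\geq(1-\lambda)S_{T/2}$ a.s.
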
  
  
Before proceeding the proof of this proposition, we first introduce the notion of maximal trading strategy in this market with transaction costs. 
Notice that a similar notion in frictionless markets is created by Delbaen and Schachermayer in \cite{DS94} for considering a superreplication problem.

\begin{definition}
An element $\varphi^0_T\in \mathcal{C}^\lambda(1)$ is called maximal, if for $\phi^0_T\in \mathcal{C}^\lambda(1)$ satisfying $\phi^0_T\geq \varphi^0_T$, a.s., we have $\phi^0_T= \varphi^0_T$, a.s. 
A trading strategy $(\varphi^0, \varphi^1)$ is called maximal in $\mathcal{A}^\lambda(1)$, 
%if for $(\phi^0, \phi^1)\in \mathcal{A}^\lambda(1)$, satisfying $\phi^0_T\geq \varphi^0_T$ and $\phi^1_T\geq \varphi^1_T$, a.s., we have $(\phi^0_T, \phi^1_T)= (\varphi^0_T, \varphi^1_T)$, a.s.
if it is associated with $(\varphi^0_T, 0)$, where $\varphi^0_T$ is  a maximal element in $\mathcal{C}^\lambda(1)$.
\end{definition}

\begin{remark}
The existence of a maximal element in $\mathcal{C}^\lambda(1)$ could be deduced by transfinite induction, provided that $\mathcal{C}^\lambda(1)$ is bounded in $L^0$ and closed w.r.t.~convergence in probability (cf.~Schachermayer \cite{Sch14super}).
\end{remark}

Indeed, the trading strategy defined by \eqref{strategy} is maximal in $\mathcal{A}^\lambda(1)$. 
%The proof is postponed to the next subsection. 

\begin{proposition}\label{maxima}
The random variable $(1-\lambda)S_{T/2}$ is a maximal element in $\mathcal{C}^\lambda(1)$. Therefore, the trading strategy $(\widehat{\varphi}^0, \widehat{\varphi}^1)$ defined by \eqref{strategy} is maximal in $\mathcal{A}^\lambda(1)$.  
\end{proposition}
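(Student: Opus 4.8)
The plan is to produce a single martingale consistent price system $\wh Z=(\wh Z^0,\wh Z^1)\in\cZ^\la_{sup}$ that is ``tight'' for the buy-hold-sell strategy, and then run the maximality argument off the superreplication bound of Lemma~\ref{suprep}. The guiding idea (the frictionless analogue of \cite{DS94}) is that a claim is maximal precisely when some CPS makes the associated wealth process a true martingale rather than a strict supermartingale. Concretely I want $\wh Z$ to satisfy the two contact conditions $S^{\wh Z}_0=S_0=1$ and $S^{\wh Z}_{T/2}=(1-\la)S_{T/2}$: the shadow price should touch the ask exactly when $\wh\varphi$ buys (at $t=0$) and touch the bid exactly when it sells (at $t=T/2$), so that no transaction cost is paid against $\wh Z$ and the supermartingale inequality is saturated.

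For the construction I would set $S^{\wh Z}_t:=f(t)S_t$ on $[0,T/2]$ with the deterministic factor $f(t):=\exp(-ct)$ where $c:=\tfrac{2}{T}\log\tfrac{1}{1-\la}>0$, so that $f$ decreases from $f(0)=1$ to $f(T/2)=1-\la$ and hence $S^{\wh Z}_t\in[(1-\la)S_t,S_t]$ throughout $[0,T/2]$. Since $S_t=\exp(B_t+t/2)$, one computes $dS^{\wh Z}_t=S^{\wh Z}_t\big(dB_t+(1-c)\,dt\big)$, so the constant market price of risk $\theta:=c-1$ removes the drift. I then let $\wh Z^0_t:=\exp(\theta B_t-\tfrac12\theta^2 t)$ on $[0,T/2]$ (a true martingale, $\theta$ being a bounded constant) and $\wh Z^1:=\wh Z^0 S^{\wh Z}$; the latter is a $\massP$-martingale because $S^{\wh Z}$ is a martingale under the measure with density $\wh Z^0$. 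On $[T/2,T]$, where $S$ is frozen at $S_{T/2}$, I freeze $S^{\wh Z}_t:=(1-\la)S_{T/2}=(1-\la)S_t$ and $\wh Z^0_t:=\wh Z^0_{T/2}$, which keeps both coordinates martingales and keeps $S^{\wh Z}$ on the lower edge of the spread. This gives a genuine martingale-CPS, in particular $\wh Z\in\cZ^\la_{sup}$, normalised so that $\wh Z^0_0=1$.

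The crucial equality is then immediate from the freezing and the contact conditions: since $\wh Z^0_T=\wh Z^0_{T/2}$ and $S^{\wh Z}_{T/2}=(1-\la)S_{T/2}$,
\[
  \massE\big[\wh Z^0_T(1-\la)S_{T/2}\big]=\massE\big[\wh Z^0_{T/2}S^{\wh Z}_{T/2}\big]=\massE\big[\wh Z^1_{T/2}\big]=\wh Z^1_0=\wh Z^0_0\,S^{\wh Z}_0=1 .
\]
For maximality, take any $\phi^0_T\in\cC^\la(1)$ with $\phi^0_T\ge(1-\la)S_{T/2}$ a.s.; by definition there is $(\psi^0,\psi^1)\in\cA^\la(1)$ with $\psi^1_T=0$ and $\psi^0_T\ge\phi^0_T$. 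Lemma~\ref{suprep} gives $\massE[\wh Z^0_T\psi^0_T]\le\wh Z^0_0=1$, whence
\[
  1=\massE\big[\wh Z^0_T(1-\la)S_{T/2}\big]\le\massE\big[\wh Z^0_T\phi^0_T\big]\le\massE\big[\wh Z^0_T\psi^0_T\big]\le1 .
\]
All inequalities are equalities, and since $\wh Z^0_T>0$ a.s.\ and $\phi^0_T\ge(1-\la)S_{T/2}$, this forces $\phi^0_T=(1-\la)S_{T/2}$ a.s., proving that $(1-\la)S_{T/2}$ is maximal in $\cC^\la(1)$. The ``therefore'' part is then purely definitional: the strategy $\wh\varphi$ of \eqref{strategy} terminates at $\big((1-\la)S_{T/2},0\big)$, so it is associated with the maximal element just identified and is therefore maximal in $\cA^\la(1)$.

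I expect the delicate point to be the construction and verification of $\wh Z$ rather than the maximality argument. One must simultaneously check that $S^{\wh Z}$ stays within $[(1-\la)S,S]$ on all of $[0,T]$, that the two contact conditions at $t=0$ and $t=T/2$ hold \emph{exactly}, and that $\wh Z^0$ is a true martingale (not merely local) so that Lemma~\ref{suprep} applies and the saturating expectation equals $1$. Because the model is fully explicit and $\theta$ is a bounded constant, each of these is a routine verification; the one genuinely substantive choice is that the boundary-contact timing is what singles out the exponential factor $f(t)=\exp(-ct)$ with this particular $c$.
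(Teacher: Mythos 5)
Your construction is internally correct and takes a genuinely different route from the paper. You exhibit an explicit \emph{martingale} CPS $\wh Z$ whose ratio $S^{\wh Z}_t=f(t)S_t$, $f(t)=e^{-ct}$ with $c=\tfrac{2}{T}\log\tfrac{1}{1-\la}$, touches the ask at $t=0$ and the bid at $t=T/2$; the computations ($f(T/2)=1-\la$, $\theta=c-1$, $\wh Z^1=\wh Z^0S^{\wh Z}$ a true $\massP$-martingale, and $\massE[\wh Z^0_T(1-\la)S_{T/2}]=\wh Z^1_0=1$) all check out, and the sandwich between this saturated value and the superreplication bound of Lemma~\ref{suprep} correctly forces $\phi^0_T=(1-\la)S_{T/2}$ a.s. The paper's own proof is entirely different: a pathwise two-case analysis in which either the stock holding stays below $1-\gamma$ with positive probability, and conditional full support of $S$ produces paths on which the strategy falls strictly short of $\wh\varphi^0_T$, or the holding reaches $1$ at a stopping time $\sigma$, after which liquidation values are compared with frictionless trading at $(1-\la)S$ and the maximality of the buy-and-hold integrand ${\bf 1}$ for $S$ (via \cite[Corollary 4.6]{DS95no}). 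Your duality argument is shorter and makes the economic mechanism (an exactly tight CPS) explicit.

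There is, however, a genuine gap in scope. The paper states at the start of Section 5.2, and in the remark following Proposition~\ref{primopt}, that in Proposition~\ref{maxima} the class $\cA^\la(1)$ is to be understood as the \emph{larger} set of admissible strategies of \cite{CS15duality, CSY15}, i.e.\ \emph{without} the no-short-selling constraint; this stronger statement is what underpins the claim that $\wh\varphi$ remains the unique optimizer on $[0,T/2]$ even when the constraints are relaxed, which is the point of Section 5. Your argument does not reach that class: it rests on Lemma~\ref{suprep}, whose proof uses the positivity of $\varphi^0$ and $\varphi^1$ to conclude that $\int(\varphi^0_{u-}dZ^0_u+\varphi^1_{u-}dZ^1_u)$ is a supermartingale, and this step fails as stated once shorting is allowed. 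The repair is easy precisely because your $\wh Z$ is a true martingale CPS: for an admissible strategy in the sense of \cite{CS15duality} with liquidation value bounded below by $-M$, one has $\wh Z^0\varphi^0+\wh Z^1\varphi^1=\wh Z^0(\varphi^0+\varphi^1S^{\wh Z})\ge -M\wh Z^0$, so this value process plus $M\wh Z^0$ is a nonnegative local martingale plus a decreasing process, hence a supermartingale, and since $\massE[\wh Z^0_T]=\wh Z^0_0$ (true martingale, not merely supermartingale) one still gets $\massE[\wh Z^0_T\psi^0_T]\le 1$; equivalently, invoke the superreplication theorem \cite[Theorem 1.4]{Sch14super} already cited in the paper. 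With that single substitution your sandwich argument goes through verbatim in the larger class, and you then recover the full statement more cleanly than the paper's case analysis does.
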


Before proving the proposition above, we first give the proof of Proposition \ref{primopt} with the help of the maximality of  (\ref{strategy}).

\begin{proof}[Proof of Proposition \ref{primopt}]
Consider an element $\phi^0_T\in \mathcal{C}^\lambda(1)$ such that $\phi^0_T\neq \widehat{\varphi}^0_T= (1-\lambda)S_{T/2}$. From the maximality of $(\widehat{\varphi}^0, \widehat{\varphi}^1)$, we have $\massP\big[\phi^0_T<\widehat{\varphi}^0_T\big]>0$. Therefore, there exists $k\in\NN$, such that
   $$ \massP\left[\phi^0_T< \big(1- \tfrac{1}{k}\big)(1-\lambda)S_{T/2}\right] >0. $$
Note that $\left\{\phi^0_T< \big(1- \tfrac{1}{k}\big)(1-\lambda)S_{T/2}\right\}\in \mathcal{F}_{T/2}$ and $\zeta$ is independent of $\mathcal{F}_{T/2}$. Then, 
%uniformly distributed and $1-\beta > 0$ almost surely, there is a $k\in\NN $ such that $\massP[1-\beta <\frac{1}{k}]>0$.
% As $\beta, S $ and all trading strategies $H$ are independent,
   $$ \massP\left[\phi^0_T< \big(1- \tfrac{1}{k}\big)(1-\lambda)S_{T/2} \mbox{ and } 1-\zeta <\tfrac{1}{k}\right]>0, $$
which implies $\massP[\phi^0_T+e_T<0] >0$. Consequently, $\massE[U(\phi^0_T+e_T)]=-\infty$. 

On the other hand, by the definition of $\zeta$, we have 
 $$ \widehat{\varphi}^0_T+e_T=(1-\zeta)(1-\lambda)S_{T/2}>0,\ \mbox{a.s.} $$
Thus, $u(1; e_T)=\massE[U(\widehat{\varphi}^0_T+e_T)]>-\infty$. The proof is completed. 
\end{proof}

\begin{remark}
 Even without the no-short-selling constraint, $(\widehat{\varphi}^0, \widehat{\varphi}^1)$ is still the unique strategy on $[0, T/2]$ solving the problem (\ref{pp}). 
 Indeed, we shall prove in the next subsection that $(\widehat{\varphi}^0, \widehat{\varphi}^1)$ is a maximal trading strategy in a larger space introduced in \cite{CS15duality, CSY15} (still denoted by $\mathcal{A}^\lambda(1)$).
\end{remark}

Next, we shall construct a shadow price in the following corollary. 
It is evident that the shadow price is not unique in our case. 
\begin{corollary}
 Define 
 \begin{align*}
    \widetilde S_t := \left\{\begin{array}{ll}
            S_0,  &t=0;\\
          \exp\left(B_t + \left(\frac{1}{2}+\frac{2\log(1-\lambda)}{T}\right)t\right), & 0 < t < T/2;\\
           (1 - \lambda)S_{T/2}, & T/2\leq  t\leq T.
         \end{array} \right.
  \end{align*}
Then, the process $\widetilde{S}$ is a shadow price of the utility maximization problem (\ref{pp}).
\end{corollary}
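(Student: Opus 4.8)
The plan is to reduce the corollary to the single value identity $u^{\widetilde S}(1;e_T)=u(1;e_T)$, since by Proposition~\ref{primopt} the frictional side of the shadow-price identity equals $\massE[U((1-\beta)(1-\lambda)S_{T/2})]$. First I would check that $\widetilde S$ is a legitimate candidate, i.e.\ that it lies within the bid-ask spread and is of the form $S^{\widehat Z}$ for some $\widehat Z\in\cZ^\lambda_{sup}$. On $(0,T/2)$ one has $\widetilde S_t=(1-\lambda)^{2t/T}S_t$, so that $(1-\lambda)S_t<\widetilde S_t<S_t$ because $2t/T\in(0,1)$; at $t=0$ we sit at the ask, while on $[T/2,T]$ we have $\widetilde S_t=(1-\lambda)S_{T/2}=(1-\lambda)S_t$, the bid, with continuity at $T/2$. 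To produce $\widehat Z$, note that on $[0,T/2]$ the process $\widetilde S$ satisfies $d\widetilde S_t=\widetilde S_t(dB_t+\theta\,dt)$ with $\theta:=1+\tfrac{2\log(1-\lambda)}{T}$, so the density $D_t:=\exp(-\theta B_t-\tfrac12\theta^2t)$ makes both $D$ and $D\widetilde S$ martingales on $[0,T/2]$; freezing $D$ at $D_{T/2}$ on $[T/2,T]$ and setting $\widehat Z^0:=D$ and $\widehat Z^1:=D\widetilde S$ yields a pair of supermartingales with $\widehat Z^1/\widehat Z^0=\widetilde S$, hence $\widehat Z\in\cZ^\lambda_{sup}$.

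For the inequality $u^{\widetilde S}(1;e_T)\geq u(1;e_T)$ I would simply replicate the frictional optimum inside the frictionless market: buying one share at $t=0$ at price $\widetilde S_0=S_0=1$ and holding it is self-financing, keeps both positions nonnegative, and delivers terminal value $\widetilde S_T=(1-\lambda)S_{T/2}$, so the attainable payoff $(1-\lambda)S_{T/2}+e_T=(1-\beta)(1-\lambda)S_{T/2}$ coincides with the one produced by \eqref{strategy}.

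The converse inequality is the crux. The decisive structural fact is that $\widetilde S$ is frozen after $T/2$, so for any admissible frictionless strategy the wealth $X_t=\widetilde\varphi^0_t+\widetilde\varphi^1_t\widetilde S_t$ is constant on $[T/2,T]$; hence $X_T=X_{T/2}$ is $\cF_{T/2}$-measurable. If $\massE[U(X_T+e_T)]>-\infty$, then $X_T+e_T\geq0$ a.s., i.e.\ $X_{T/2}\geq\beta(1-\lambda)S_{T/2}$; since $\beta$ is uniform on $(0,1)$ and independent of $\cF_{T/2}$ while $X_{T/2}$ and $S_{T/2}$ are $\cF_{T/2}$-measurable, letting $\beta\nearrow1$ forces $X_{T/2}\geq(1-\lambda)S_{T/2}$ a.s. On the other hand, since $\widehat Z^1=\widehat Z^0\widetilde S$ is a martingale and $\widehat Z^0X$ is a supermartingale for admissible $X$ (as in Lemma~\ref{suprepz}), we get $\massE[\widehat Z^0_TX_T]\leq\widehat Z^0_0=\massE[\widehat Z^1_T]=\massE[\widehat Z^0_T(1-\lambda)S_{T/2}]$, using $\widetilde S_T=(1-\lambda)S_{T/2}$; together with $X_T\geq(1-\lambda)S_{T/2}$ and $\widehat Z^0_T>0$ this forces $X_T=(1-\lambda)S_{T/2}$ a.s. Consequently every frictionless payoff $\widetilde g$ with finite expected utility obeys $\widetilde g\leq X_T=(1-\lambda)S_{T/2}$, whence $\widetilde g+e_T\leq(1-\beta)(1-\lambda)S_{T/2}$ and $\massE[U(\widetilde g+e_T)]\leq\massE[U((1-\beta)(1-\lambda)S_{T/2})]=u(1;e_T)$, giving $u^{\widetilde S}(1;e_T)\leq u(1;e_T)$.

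The main obstacle I anticipate is making the converse rigorous within the correct admissibility class: one must (i) justify the freezing and $\cF_{T/2}$-measurability of terminal wealth and the passage $\beta\nearrow1$ carefully from the independence of $\beta$, and (ii) secure the superreplication bound $\massE[\widehat Z^0_TX_T]\leq\widehat Z^0_0$ in the enlarged space of \cite{CS15duality,CSY15} flagged in the preceding remark, where short selling is allowed, so that the deflator estimate still applies. Note also that the specific drift correction $\tfrac{2\log(1-\lambda)}{T}$ is exactly what makes $\widetilde S_T=(1-\lambda)S_{T/2}$, so that the replicating cost of $(1-\lambda)S_{T/2}$ equals the budget and the gap closes; everything else—membership $\widehat Z\in\cZ^\lambda_{sup}$ and the replication inequality—is a direct verification.
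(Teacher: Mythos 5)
Your proposal is correct, and it follows the same overall strategy as the paper---reduce the corollary to showing that the buy-and-hold payoff $\widetilde S_T=(1-\lambda)S_{T/2}$ is optimal in the frictionless market associated with $\widetilde S$---but it obtains the decisive ingredient, the maximality of $\widetilde S_T$ in $\mathcal{C}^{\widetilde{S}}(1)$, by a different and more self-contained route. The paper observes that the buy-and-hold wealth process is $\widetilde S$ itself, asserts the existence of an equivalent martingale measure under which it is a uniformly integrable martingale, and cites \cite[Corollary 4.6]{DS95no} for maximality; frictionless optimality then follows by the contradiction argument of Proposition \ref{primopt} (a competitor must fall below $\widetilde S_T$ with positive probability, and the uniform $\beta$, independent of $\cF_{T/2}$, then produces negative terminal wealth with positive probability, hence utility $-\infty$). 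You instead build the deflator explicitly, $\widehat Z=(D,D\widetilde S)$ with $D_t=\exp(-\theta B_t-\tfrac12\theta^2t)$, $\theta=1+\tfrac{2}{T}\log(1-\lambda)$, frozen after $T/2$, and prove the maximality by hand: frozen wealth is $\cF_{T/2}$-measurable, independence of $\beta$ forces any finite-utility competitor to satisfy $X_T\geq(1-\lambda)S_{T/2}$, and the deflator bound $\massE[\widehat Z^0_TX_T]\leq\widehat Z^0_0=\massE[\widehat Z^0_T(1-\lambda)S_{T/2}]$ forces equality. This is essentially the contrapositive of the paper's argument with identical ingredients; what it buys is independence from the Delbaen--Schachermayer maximality theorem and an explicit exhibit of the supermartingale-CPS $\widehat Z\in\cZ^\lambda_{sup}$ certifying that $\widetilde S$ lies in the admissible class of shadow-price candidates, which the paper never writes down. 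The one point you flag as an obstacle---validity of the deflator estimate in the enlarged admissibility class of \cite{CS15duality,CSY15}, where short selling is permitted---is real but standard, and you are right that Lemma \ref{suprepz} alone does not cover it: under $d\massQ=D_{T/2}\,d\massP$ any wealth process $1+\big(H\sint\widetilde S\big)$ that is uniformly bounded from below is a $\massQ$-local martingale bounded from below, hence a $\massQ$-supermartingale, which is exactly the inequality $\massE[D_{T/2}X_T]\leq 1$ you need; with that remark inserted, your argument is complete.
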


\begin{proof}
 It follows from the definition of $\widetilde{S}$ that $\widetilde{S}\in[(1-\lambda)S,S]$. 

 Obviously, the wealth process associated with the buy-hold-sell strategy under $\widetilde{S}$ is $\widetilde{S}$ itself. 
 Thus, we can find an equivalent martingale measure $\massQ\in \mathcal{M}^e(\widetilde{S})$, %defined by
%  $$ \frac{d\massQ}{d\massP} = \exp \left(-B_T-\left(\frac{1}{2}+\frac{\log(1-\lambda)}{T}\right)T\right),$$
   under which this wealth process is a uniformly integrable martingale. By \cite[Corollary 4.6]{DS95no}, $\widehat{\varphi}^0_T=\widetilde{S}_T$ is a maximal element in $\mathcal{C}^{\widetilde{S}}(1)$, 
   where 
    $$ \mathcal{C}^{\widetilde{S}}(1):=\left\{\widetilde{X}_T=1+\big(H\sint\widetilde{S}\big)_T\,\Big|\,H\mbox{ is admissible}\right\}. $$
 Similarly to the proof of Proposition \ref{primopt}, one can see that $\widehat{\varphi}^0_T=\widetilde{S}_T$ solves the utility maximization problem in the frictionless market $u^{\widetilde{S}}(1; e_T)$, and 
  \begin{align*}
     u^{\widetilde{S}}(1; e_T) &= \sup_{\widetilde{X}_T\in \mathcal{C}^{\widetilde{S}}(1)}\Es{U\big(\widetilde{X}_T+e_T\big)}
                    = \Es{U\big(\widetilde S_T - \zeta(1-\lambda){\widetilde S_T}\big)} \\
                   &= \Es{U\big((1- \zeta)(1- \lambda)S_T\big)} = u(1; e_T),
  \end{align*}
  which implies that $\widetilde{S}$ is the desired shadow price. 
\end{proof}

We now prove that (\ref{strategy}) defines a maximal trading strategy in $\mathcal{A}^\lambda(1)$, i.e., Proposition \ref{maxima}. Here, $\mathcal{A}^\lambda(1)$ could be the set of admissible trading strategies defined in \cite{CS15duality} without no-short-selling constraint.
\begin{proof}[Proof of Proposition \ref{maxima}]
 Let $\varphi=(\varphi_t^0, \varphi_t^1)_\tT$ be another trading strategy in $\mathcal{A}^\lambda(1)$. 
 % process satisfying 
 Since the stock-price process $S$ is continuous and the predictable trading strategies are pathwisely of finite variation, we could only consider those 
 %$ slightly modify the paths of $(\widehat{\varphi}^0, \widehat{\varphi}^1)$ and 
 $(\varphi^0, \varphi^1)$ 
 %without altering their terminal liquidation values, by letting 
 %$(\widehat{\varphi}^0_{0-}, \widehat{\varphi}^1_{0-})=(\varphi^0_{0-}, 
%\varphi^1_{0-})=(1, 0)$ and assuming that
% $(\widehat{\varphi}^0, \widehat{\varphi}^1)$ and $(\varphi^0, \varphi^1)$ have
 with right-continuous paths (cf.~\cite{CSY15}). In addition, we assume that the agent buys no more stocks after time $T/2$. 
 
 %Because of the self-financing constraint, $\varphi_{0}^0 + \varphi_{0}^1 \leq 1$. 
 In what follows, we shall discuss trading strategies $\varphi$ in two cases and furthermore, we shall prove that the following statement cannot hold
 \begin{equation}\label{stat}
 \varphi^0_T\geq \widehat{\varphi}^0_T=(1-\lambda)S_{T/2},\ \mbox{a.s.\ with\ } \massP[\varphi^0_T>\widehat{\varphi}^0_T]>0,
 \end{equation}
 and thus $\widehat{\varphi}$ is a maximal trading strategy in $\mathcal{A}^\lambda(1)$.
 
 \vspace{3mm}
 
 %\noindent\textcolor{red}{Here, I shall organize a new proof to substitute the original proof of our Case 2. Please notice that the proof may not based on the old picture and the old notation. New proof starts here.}\\
% \noindent  Assume without loss of generality that the investor liquidates his/her position on $(T/2, T]$.\\[12pt]
% . We talk nothing after $T/2$.\\
 %In case 1, we know that if $\varphi^0_0 \geq 1$, then this candidature trading strategy dies.\\
\noindent {\bf{Case I.}} Suppose that there exists a $\gamma>0$ and a set $B\in \mathcal{F}_{T/2}$ with $\massP(B)>0$ such that for $\omega\in B$, $\varphi^1_t(\omega)\leq 1-\gamma$, $0\leq t\leq T/2$,  $\massP$-a.s. (i.e., the holding in stock never excesses $1-\gamma$). Our aim is to construct a set $\overline{B}\in\mathcal{F}_{T/2}$ with $\massP(\overline{B})>0$, such that for $\omega\in \overline{B}$, $\varphi^0_T(\omega)<\widehat{\varphi}^0_T(\omega)$. 
%I will show that \eqref{stat} cannot happen.\\

 For $\beta>0$, define $\tau_\beta:=\inf\{t>0: S_t\geq 1+\beta\}\wedge T/2$. Then, by choosing $\beta\ll\lambda$ sufficiently small, we can find an $\mathcal{F_{\tau_\beta}}$-measurable set defined by 
%
%where $B'$ is defined by
%We define the following 
$$
B':=\{\omega: \tau_\beta(\omega)<T/2\}\cap\{\omega: 1-\lambda/3\leq S_t(\omega)\leq 1+\beta,\ 0\leq t\leq \tau_\beta(\omega)\},
$$
such that  $\massP(B')>1-\massP(B)$.
%We could properly choose a $\beta$ sufficiently small such that 
%$$
%\massP(B')>1-\massP(B).
%$$
%(%Please notice that the lower bound $1-\lambda/3$ is fixed here, so we can achieve this by choosing sufficiently small $\beta$.)\\
It is obvious that $\massP(B' \cap B)>0$ and for $\omega\in B'$, $S_{\tau_\beta(\omega)}(\omega)=1+\beta$. Furthermore, we define
%It is without saying that on $\{\tau_\beta<T/2\}$, $S_{\tau_\beta}=1+\beta$. Therefore, for each $\omega\in B'$, $S_{\tau_\beta(\omega)}(\omega)=1+\beta$.
%On the other hand, it is obvious that $\massP[\{\tau_\beta<T\}\cap B]>0$.\\
%Now we define another $\mathcal{F}_{\tau_\beta}$-measurable set 
$$
\widetilde{B}':=B'\cap\{\omega: \varphi^1_t(\omega)\leq 1-\gamma,\ 0\leq t\leq \tau_{\beta}(\omega)\}\supseteq B'\cap B.
$$
From $\massP(B'\cap B)>0$, we have $\massP(\widetilde{B}')>0$.
   \begin{figure}
   \centering
    \includegraphics[width=12cm]{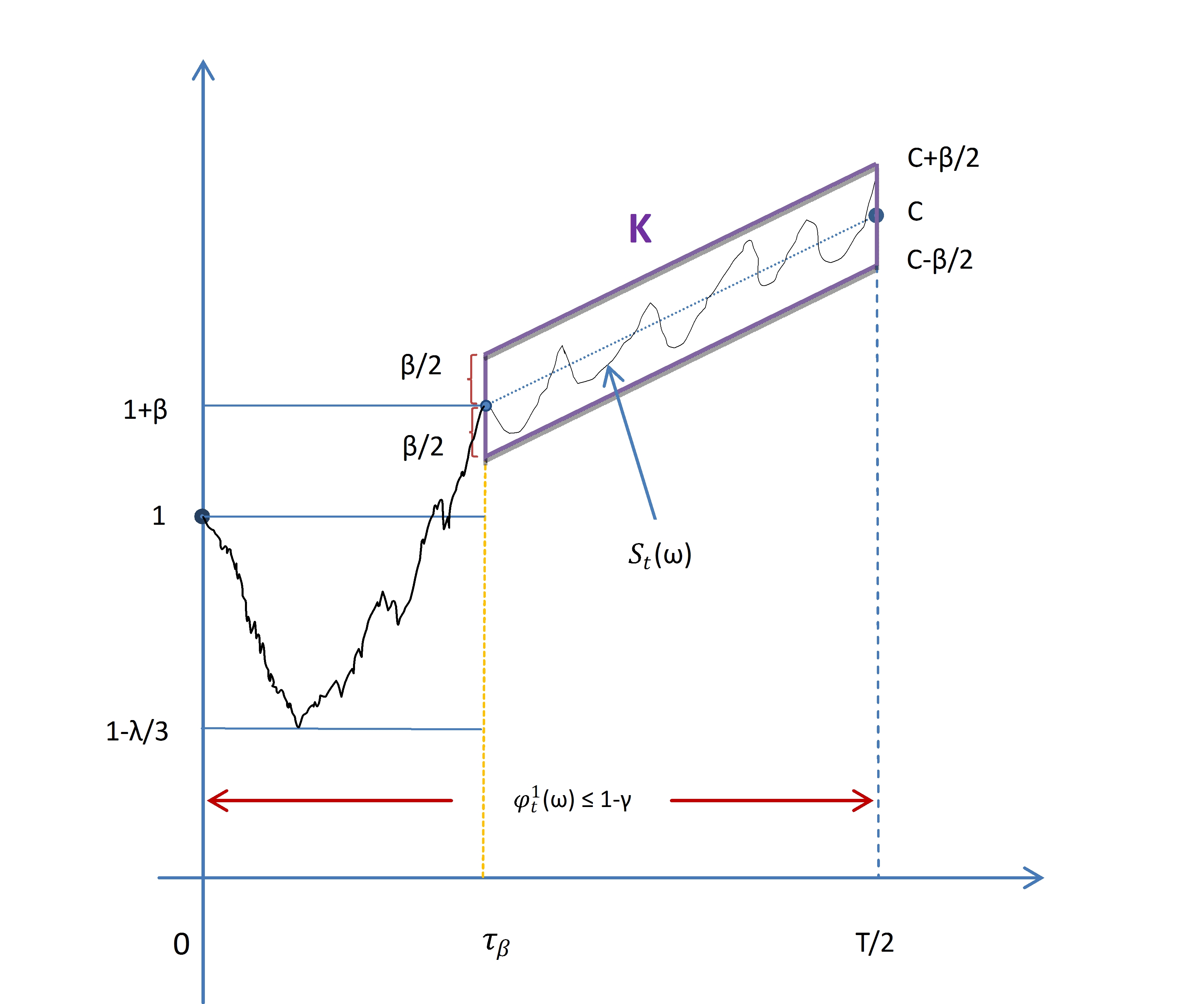}\caption{Case 1}\label{Figure1}
  \end{figure} 
We  now construct a corridor $K$ of stock price on the time interval $\llbracket \tau_\beta,T/2\rrbracket$ with width $\beta$ (see Figure \ref{Figure1}). The center point of the corridor at $\tau_\beta$ is $1+\beta$ while at $T/2$ is a sufficiently large number $C$. Obviously, the stock-price process $S$ has conditional full support (see, e.g., \cite{GRS08}), which implies the set  
  $$\overline{B}':=\widetilde{B}'\cap\{S\mbox{ stays in the corridor }K\ \mbox{on}\ \llbracket \tau_\beta, T/2\rrbracket\big\},$$
  has a strictly positive measure, i.e., $\massP(\overline{B}')>0$. 
For $\omega\in \overline{B}'$, the most advisable trading strategy is to buy $1-\gamma$ stock shares at the lowest price, which should be at least $1-\lambda/3$, and hold them until $T/2$. Therefore,
%We now calculate for $\omega\in \bar{B}'$, 
%how big $V^{liq}_{T/2}(\varphi)(\omega)$ can be. \\
%Please remember that for $\omega\in \bar{B}'$, $S(\omega)$ stays in $[1-\lambda/3, 1+\beta]$ on $\llbracket 0, \tau_\beta\rrbracket$ and stays in the corridor with width $\beta$ on $\llbracket \tau_\beta, T/2\rrbracket$. We could choose $\beta$ sufficiently small than $\lambda$, such that selling before $T/2$ and then buying back reduces the liquidation value at  $T/2$ (you can not gain more, because of the transaction cost $\lambda$.). In particular, selling short on $[0, T/2]$ gives negative efforts to the liquidation value. You may ask, why you should buy back? The reason is that if you don't buy back then you lose the opportunity to gain more when the price goes higher and higher. So the best situation is that (which can not be done in reality, because of the measurability, however, here we only use this to estimate how big $V^{liq}_{T/2}(\varphi)(\omega)$ can be.) you keep $(1-\gamma)$ stock shares until $T/2$, but you never pay for this when you buy them (in reality, you should pay at least $(1-\lambda/3)$ each share for $\omega\in \bar{B}'$). Then, this gives a upper bound of $V^{liq}_{T/2}(\varphi)(\omega)$: for $\omega\in \bar{B}'$,\\
    $$\varphi^0_T(\omega)=V^{liq}_{T/2}(\varphi)(\omega)<(1-\lambda)\big(C + \tfrac{\beta}{2}\big)(1- \gamma)+1.$$
On the other hand,
    $$\widehat{\varphi}^0_{T}(\omega)=\widehat{\varphi}^0_{T/2}(\omega)>(1-\lambda)\big(C-\tfrac{\beta}{2}\big).$$ 
  By properly
 % For the $(\wh \varphi_t)_\tT$, we know that the worst profit of the trading is that $(1-\lambda)(C-\frac{\beta}{2})-1$, 
  %  while for the arbitrage strategy $(\varphi_t)_\tT$ with $\varphi_0^1 <1$ the best profit of trading is that $(1-\lambda)(C + \frac{\beta}{2})(1- \gamma)$. 
  choosing {$C > \frac{1}{\gamma(1-\lambda)} + \frac{\beta(2-\gamma)}{2\gamma}$}, we could ensure that $\varphi^0_T(\omega)<\widehat{\varphi}^0_T(\omega)$.
  %for all $\omega\in \overline{B}'$.
  %cannot beat $\widehat{\varphi}$ on $\bar{B}'$. \\[6pt]
%Now the only thing we have to consider is the following case:\\

\vspace{3mm}

{\noindent\bf{Case II.}} Suppose that Case I cannot happen, then for almost every $\omega\in \Omega$, the following stopping time takes values 
%he paths of $\varphi^1$, we deduce that for almost every $\omega\in \Omega$, there exists $t\in [0, T/2]$ such that $\varphi^1_t(\omega) \geq 1$. We now define the following stopping time taking values 
in $[0, T/2]$:
$$\sigma:=\inf\{t>0:\varphi^1_t\geq 1\ \mbox{or}\ \varphi^1_{t-}\geq 1\}.$$
%then %$\sigma\leq T/2$, a.s. and 
%$\varphi^1_\sigma\geq 1$, a.s. 
% or 
%If Case I doesn't happen, then the path of $\varphi^1$ can:\\
%- either reaches or excesses 1;\\
%- either 
%there exists a sequence $\{t_n\}_{n\in \mathbb{N}}\subset [0, T]$, such that $t_n\rightarrow t-$ and $\varphi^1_{t_n}\rightarrow 1$. In the latter situation, we could slightly modify the path by setting $\varphi^1_t=\varphi^1_{t-}=\lim_n \varphi^1_{t_n}$. This modification will not alter the liquidation value at time $T/2$ because of the continuity of $S$.\\
%Thus, without loss of generality, we assume that all paths of $\varphi^1$ reaches or excesses 1 in this case.\\
%Before proceeding our argument, we first present the following result. 
For any stopping times $\tau$ and $\rho$ taking values in $[0, T/2]$, $\rho\geq \tau$, 
%if $\varphi^1_\tau\geq 0$, a.s., 
we have:
   \begin{align}\label{pq1}
  V^{liq}_{\rho}(\varphi) 
                &= \varphi_\tau^0 + \varphi_\tau^1 S_\tau + \int_\tau^\rho \varphi_t^1 dS_t 
                    - \lambda \int_\tau^\rho S_t d\varphi_t^{1,\downarrow}-\lambda S_\rho (\varphi_\rho^1)^+\notag\\
                  &=  \varphi_\tau^0 + \varphi_\tau^1 S_\tau + \int_\tau^\rho \varphi_t^1 dS_t
                    - \lambda \int_\tau^\rho S_t d\varphi_t^{1,\downarrow}-\lambda S_\rho (\varphi_\rho^1)^+ 
                     + \lambda \int_\tau^\rho \varphi_t^1 dS_t  - \lambda \int_\tau^\rho \varphi_t^1 dS_t\notag\\
                  &=  \varphi_\tau^0 + (1 - \lambda)\varphi_\tau^1 S_\tau  +  (1 - \lambda) \int_\tau^\rho \varphi_t^1 dS_t
                   -\lambda S_\rho (\varphi_\rho^1)^- - \lambda \int_\tau^\rho S_t d\varphi_t^{1,\uparrow}.
%                  & \leq   V^{liq}_{\tau}(\varphi)  +  (1 - \lambda) \int_\tau^\rho\varphi_t^1 dS_t.  \label{pq5}
   \end{align}
%
%
%
%the investor by $1-\alpha$ at time $0$, $\alpha>0$, but for almost every path, the holding in stock excesses $1$ before $T/2$, precisely, if 
In the sequel, we shall consider two situations, i.e., 
\begin{enumerate}[(i)]
\item For almost every $\omega\in \Omega$, $$\varphi_\sigma^0 (\omega) + (1 - \lambda)\varphi_\sigma^1(\omega)  S_\sigma(\omega) \geq \widehat \varphi_\sigma^0(\omega)  + (1 - \lambda)\widehat \varphi_\sigma^1(\omega)  S_\sigma(\omega);$$
\item There exists an $\mathcal{F}_\sigma$-measurable set $D$, with $\massP(D)>0$, for $\omega\in D$, $$\varphi_\sigma^0 (\omega)+(1-\lambda)\varphi_\sigma^1(\omega)  S_\sigma(\omega) <\widehat \varphi_\sigma^0(\omega) +(1-\lambda)\widehat \varphi_\sigma^1(\omega)  S_\sigma(\omega).$$
\end{enumerate}
Assume that (i) holds, then we observe that for almost every $\omega\in \Omega$, 
\begin{equation}\label{21}\varphi_\sigma^0 (\omega) + (1 - \lambda)\varphi_\sigma^1(\omega)  S_\sigma(\omega) = \widehat \varphi_\sigma^0(\omega)  + (1 - \lambda)\widehat \varphi_\sigma^1(\omega)  S_\sigma(\omega).
\end{equation}
Indeed, if $\varphi^1_{\sigma(\omega)} (\omega)\geq 1$, then 
  $$ \varphi_\sigma^0 (\omega) + (1 - \lambda)\varphi_\sigma^1(\omega)  S_\sigma(\omega)=V^{liq}_{\sigma(\omega)}(\varphi)(\omega)\leq V^{liq}_{\sigma(\omega)-}(\varphi)(\omega); $$
 if $\varphi^1_{\sigma(\omega)} (\omega)< 1$, then $\varphi^1_{\sigma(\omega)-} (\omega)\geq 1$, and thus from the self-financing constraint, 
  $$ \varphi_\sigma^0 (\omega) + (1 - \lambda)\varphi_\sigma^1(\omega)  S_\sigma(\omega)\leq \varphi_{\sigma-}^0 (\omega) + (1 - \lambda)\varphi_{\sigma-}^1(\omega)  S_{\sigma}(\omega)=V^{liq}_{\sigma(\omega)-}(\varphi)(\omega). $$ 
Consider a frictionless market in which the agent trades for $(1-\lambda)S$, where the agent gains better than trading in the  market with transaction costs. Then, we can deduce
\begin{align*}
\widehat\varphi_0^0 + \widehat\varphi_0^1 S_0+(1-\lambda)\int^\sigma_0 \widehat\varphi^1_tdS_t&= \widehat \varphi_\sigma^0  + (1-\lambda)\widehat \varphi_\sigma^1S_\sigma 
   \leq \varphi_\sigma^0 + (1 - \lambda)\varphi_\sigma^1  S_\sigma\\
  &\leq V^{liq}_{\sigma-}(\varphi)\leq \varphi_0^0 + \varphi_0^1 S_0+(1-\lambda)\int^\sigma_0 \varphi^1_tdS_t.
\end{align*}
Since $\varphi_0^0 + \varphi_0^1 (1-\lambda)S_0=\widehat \varphi_0^0 +\widehat \varphi_0^1(1-\lambda) S_0=1$ and ${\bf 1}$ is a maximal trading strategy when trading for $(1-\lambda)S$, we can prove (\ref{21}).

Now we calculate $V^{liq}_{T}(\varphi)$. From (\ref{pq1}), we obtain
\begin{equation} \label{ea3}
\begin{aligned}
 V^{liq}_{T}(\varphi) &\leq  {\varphi}^0_\sigma + (1-\lambda)\varphi_\sigma^1 S_\sigma +(1 - \lambda)\int^T_\sigma \varphi^1_t dS_t \\
 & =\widehat{\varphi}_0^0 + (1-\lambda)\wh\varphi_0^1 S_0 + (1-\lambda)\int^\sigma_0 \widehat \varphi^1_t dS_t+(1 - \lambda) \int_\sigma^T  \varphi_t^1 dS_t \\
&  =\widehat{\varphi}_0^0 + (1-\lambda)\wh\varphi_0^1 S_0 + (1-\lambda)\int^T_0 \left(\widehat \varphi^1_t{\bf 1}_{\llbracket 0, \sigma\rrbracket}(t)+
\varphi^1_t{\bf 1}_{\llbracket \sigma, T\rrbracket}(t)\right)
 dS_t.
% + \int_\sigma^T  \varphi_t^1 dS_t = (1 - \lambda) S_{T/2}.
\end{aligned}
\end{equation}
It is obvious that trading for $(1-\lambda)S$, $((\widehat \varphi^1_\cdot{\bf 1}_{\llbracket 0, \sigma\rrbracket}(\cdot)+
\varphi^1_\cdot{\bf 1}_{\llbracket \sigma, T\rrbracket}(\cdot))\sint (1-\lambda)S)$ is uniformly bounded from below, and thus admissible. Comparing (\ref{ea3}) with
   \begin{align*}
V^{liq}_{T}(\widehat\varphi) &= \widehat{\varphi}_0^0 + (1-\lambda)\wh\varphi_0^1 S_0 + (1-\lambda)\int_0^T \widehat \varphi_t^1 dS_t,
   \end{align*}
we can conclude from the maximality of $(1-\lambda)\int^T_0{\bf 1}dS_t$ that \eqref{stat} cannot happen.

On the other hand, if (ii) holds, then by similar reasoning, we have for $\omega\in D$,
%{\bf Proof of Claim II(b)} if there exists a $\mathcal{F}_\sigma$-measurable set $D$, with $\massP(D)>0$, for $\omega\in D$,
%$$
%V^{liq}_\sigma(\varphi)< V^{liq}_\sigma(\widehat{\varphi}),
%$$
%we duduce for $\omega\in D$,
%the maximality holds. 
\begin{equation} \label{ere}
\begin{aligned}
 V^{liq}_{T}(\varphi)
 % &\leq  V^{liq}_{\sigma}(\varphi)+(1 - \lambda)\int^T_\sigma \varphi^1_t dS_t\label{ea1}\\
 %&<V^{liq}_{\sigma}(\widehat \varphi)+(1 - \lambda)\int^T_\sigma \varphi^1_t dS_t\label{ea1}\\
 & <\widehat{\varphi}_0^0 + (1-\lambda)\wh\varphi_0^1 S_0 + (1-\lambda)\int^\sigma_0 \widehat \varphi^1_t dS_t+ (1 - \lambda)\int_\sigma^T  \varphi_t^1 dS_t \\
&  =\widehat{\varphi}_0^0 + (1-\lambda)\wh\varphi_0^1 S_0 + (1-\lambda)\int^T_0 \left(\widehat \varphi^1_t{\bf 1}_{\llbracket 0, \sigma\rrbracket}(t)+
\varphi^1_t{\bf 1}_D{\bf 1}_{\llbracket \sigma, T\rrbracket}(t)\right)
 dS_t.
% + \int_\sigma^T  \varphi_t^1 dS_t = (1 - \lambda) S_{T/2}.
\end{aligned} 
\end{equation}

Again, $((\widehat \varphi^1{\bf 1}_{\llbracket 0, \sigma\rrbracket}+\varphi^1{\bf 1}_D{\bf 1}_{\llbracket \sigma, T\rrbracket})\sint (1-\lambda)S)$ is uniformly bounded from below when trading for $(1-\lambda)S$, and thus admissible. 
Comparing $\big((\widehat \varphi^1{\bf 1}_{\llbracket 0, \sigma\rrbracket}+\varphi^1{\bf 1}_D{\bf 1}_{\llbracket \sigma, T\rrbracket})\sint (1-\lambda)S\big)_T$ with $({\bf 1}\sint (1-\lambda)S)_T$, 
  we can conclude that either for almost every $\omega\in D$, 
  $$ \big((\widehat \varphi^1{\bf 1}_{\llbracket 0, \sigma\rrbracket}+\varphi^1{\bf 1}_D{\bf 1}_{\llbracket \sigma, T\rrbracket})\sint (1-\lambda)S\big)_T= ({\bf 1}\sint (1-\lambda)S)_T, $$
or there exists a subset $D'\subset D$, $D'\in \mathcal{F}_{T/2}$,
such that for $\omega\in D'$, 
 $$ \big((\widehat \varphi^1{\bf 1}_{\llbracket 0, \sigma\rrbracket}+\varphi^1{\bf 1}_{D}{\bf 1}_{\llbracket \sigma, T\rrbracket})\sint (1-\lambda)S\big)_T< ({\bf 1}\sint (1-\lambda)S)_T. $$
In particular, we exclude \eqref{stat} by recalling (\ref{ere}).
\end{proof}

\section{Appendix}

In the appendix, we prove the technical lemma, which was used to prove our main result in Section 3.
%Firstly, we introduce the following lemma.
%
%({\color{Magenta} Junjian: Drop the introduction at the beginning of Appendix? Only an Appendix with Lemma?})
%
%{\color{cyan}(In Lemma 5.1, the existence of certain sets is claimed. In
%the proof of Theorem 3.5, however, these sets are not used! As far as I can judge, it is only the argument
%i%n the proof of Lemma 5.1 that is needed for the proof of Theorem 3.5. This should be clarified.
%
%The proof of Lemma 5.1 is sloppy: if (5.1) holds one cannot deduce line 42 on p. 17 with $\alpha$,
%only with $\alpha-\varepsilon$ for arbitrarily small $\varepsilon$. In line 52, $n_{N+1}$ is used while it
%is only defined afterwards. )}
%%
%
\begin{lemma}\label{lemnew}
 Suppose $\{f_n\}_{n\in \mathbb N}$ is a sequence in $L^1$, $f_n\rightarrow f_0\in L^0$, almost surely. 
Moreover, $\lim_{n\rightarrow \infty}\massE[f_n]$ is finite and $f^+_0$ is integrable. We denote $\alpha:= \lim_{n\rightarrow \infty}\massE[f_n]-\massE[f_0]$. In particular, if $\massE[f_0]=-\infty$, we note $\alpha:=\infty$.
% If for some $\alpha\in (0, \infty]$ 
   %$$ \lim_{n\rightarrow \infty}\massE[f_n]=\massE[f_0]+\alpha, $$
  Then, we have for any $M>0$, 
            \begin{align}\label{poi}
              \limsup_{n\rightarrow \infty} \massE\big[f_n{\bf 1}_{\{f_n\geq M\}}\big]\geq \alpha.
            \end{align}
 %   \item For any $\alpha'<\alpha$ and $M>0$, there exists a subsequence $\{f_{n_k}\}_{k\in \mathbb{N}}$ and a sequence of disjoint sets $(A_k)_{k\in\mathbb{N}}$, 
   %         such that for each $k\in \mathbb{N}$, $f_{n_k}\geq M$ on $A_k$, and
      %       $$ \massE\big[f_{n_k}{\bf 1}_{A_k}\big]\geq \alpha'. $$
 % \end{enumerate}
\end{lemma}
 
\begin{proof}
We suppose, contrary to our claim \eqref{poi}, there exists $M>0$, such that 
     \begin{align} \label{poe}
       \limsup_{n\rightarrow \infty} \massE\big[f_n{\bf 1}_{\{f_n\geq M\}}\big]=:\beta< \alpha,
     \end{align}
     which implies the boundedness of $\{\massE[f_n{\bf 1}_{\{f_n\geq M\}}]\}_{n\in\mathbb{N}}$.
    Suppose $\massE[f_0]=-\infty$, then we have
      $$ \massE\big[f_0{\bf 1}_{\{f_0< M\}}\big]\leq \massE[f_0]=-\infty. $$ 
  Thus,
    \begin{align} \label{loe}
      \limsup_{n\to\infty}\massE\big[f_n {\bf 1}_{\{f_n< M\}}\big]\leq \massE\big[f_0{\bf 1}_{\{f_0< M\}}\big]=-\infty.
    \end{align}
  From \eqref{loe} and \eqref{poe}, we can conclude that $\lim_{n\rightarrow \infty}\{\massE[f_n]\}_{n\in \mathbb{N}}=-\infty$, which contradicts to the assumption.
  Therefore, $\massE[f_0]\in\RR$. 
  In this case, we have 
   \begin{equation} \label{Ef01>Ef0}
    \begin{aligned} 
     \massE\big[{f_0{\bf 1}_{\{f_0<M\}}}\big] &\geq \limsup_{n\rightarrow \infty}\massE\big[{f_n{\bf 1}_{\{f_n<M\}}}\big] \\
           &= \lim_{n\rightarrow \infty}\massE[{f_n}]-\liminf_{n\rightarrow \infty}\massE\big[{f_n{\bf 1}_{\{f_n\geq M\}}}\big] \\
           &\geq \massE[f_0]+\alpha-\beta,
    \end{aligned}
   \end{equation}
   where the equality is deduced from the convergence of $\{\massE[f_n]\}_{n\in \mathbb{N}}$ and the boundedness of $\{\massE[f_n{\bf 1}_{\{f_n\geq M\}}]\}_{n\in\mathbb{N}}$.
  Obviously, \eqref{Ef01>Ef0} is a contradiction.
  \end{proof}
  \begin{remark}
 A generalized version of this lemma also holds, which is similar to \cite[Lemma 3.16]{Sch04LN} (ii) (see Gu \cite{Gu17} for the detailed proof). Precisely, if we assume the same as the above lemma, then for any $\alpha'<\alpha$ and $M>0$, there exists a subsequence $\{f_{n_k}\}_{k\in \mathbb{N}}$ and a sequence of disjoint sets $(A_k)_{k\in\mathbb{N}}$, 
            such that for each $k\in \mathbb{N}$, $f_{n_k}\geq M$ on $A_k$, and
             \begin{align*}\massE\big[f_{n_k}{\bf 1}_{A_k}\big]\geq \alpha'. \end{align*}
  \end{remark}

\noindent {\bf Acknowledgement:} The authors gratefully acknowledge financial support from the Austrian Science Fund (FWF) under grant P25815 and from the European Research Council under ERC Advanced Grant 321111. This work is partially done during the visit of L. Gu and J. Yang hosted by Prof. N. Touzi at CMAP, \'Ecole Polytechnique, which is very much appreciated. The authors are grateful to Prof.~W.~Schachermayer for his suggestions on the Appendix and to the anonymous reviewers for their kind comments.
%  in particular, for the formulation of Lemma \ref{lemnew}.
% on the reformulation  
\bibliography{SPwithRE_Rev3}
\bibliographystyle{abbrv}
 
\end{document}